\documentclass[11pt]{article}

\usepackage[margin=1in]{geometry}
\usepackage{amsmath,amssymb,amsthm,mathtools}
\usepackage{microtype}
\usepackage{booktabs}
\usepackage{enumitem}
\usepackage{xcolor}
\usepackage[colorlinks=true,linkcolor=blue,citecolor=blue,urlcolor=blue]{hyperref}
\usepackage{setspace}
\newtheorem{theorem}{Theorem}
\newtheorem{lemma}{Lemma}
\newtheorem{claim}{Claim}
\newtheorem{proposition}{Proposition}

\newtheorem{definition}{Definition}

\numberwithin{equation}{section}

\newcommand{\E}{\mathbb E}
\newcommand{\Prb}{\mathbb P}
\newcommand{\OPT}{\mathrm{OPT}}
\newcommand{\MW}{\mathrm{MW}}

\newcommand{\Cmax}{C_{\max}}
\newcommand{\one}{\mathbf 1}

\newcommand{\DetBIC}{\mathrm{DetBIC}}
\newcommand{\RandBIC}{\mathrm{RandBIC}}

\title{Gaps and Augmentations in Bayesian Scheduling}
\author{Ahuva Mu'alem%
\thanks{The author is grateful to the WINE 2026 referees for their thoughtful comments and suggestions. Financial support from the Australian Research Council under grant DP190102064 is gratefully acknowledged.}}

\begin{document}
\maketitle

\begin{abstract}
The recent resolution of the Nisan--Ronen conjecture~\cite{NR,CKK} establishes that the optimal worst-case approximation ratio achievable by deterministic truthful mechanisms for minimizing makespan  on \(m\) unrelated machines is exactly \(m\). This motivates asking how prior information, Bayesian incentive compatibility (BIC), randomization, and machine-side resource augmentation change this prior-free barrier.

We obtain three main results. First, we prove an asymptotic \(4/3\) lower bound for randomized BIC mechanisms with two machines. This strengthens the previous \(1.2\) deterministic-BIC lower bound~\cite{MS} and extends the impossibility to randomized-BIC mechanisms. Second, randomization is already known to strictly outperform deterministic mechanisms in the prior-free setting: for two machines, a randomized truthful \(7/4\)-approximation is known, whereas every deterministic truthful mechanism has approximation ratio at least \(2\)~\cite{NR}. We show that randomized BIC scheduling mechanisms are likewise strictly more powerful than their deterministic counterparts, exhibiting an asymptotic BIC-integrality gap of \(8/7\). Thus, optimality in prior-dependent BIC scheduling can strictly require randomization. This parallels the role of lotteries in multidimensional revenue maximization, where randomization can strictly improve revenue~\cite{MV,BCKW}. 
Third, we show that when job assignments are sufficiently well spread across machines (more formally, when the pairwise collision parameter \(\Delta_r\) is bounded by a constant independent of both the number of machines \(m\) and the number of sampled layers \(r\)), then \(O(m/\varepsilon)\) sampled layers suffice for the standard truthful MinWork mechanism to achieve a \((1+\varepsilon)\)-approximation to the original first-best benchmark. Equivalently, \(O(m/\varepsilon)\) sampled replicas per machine suffice. This resource-augmentation result parallels the Bulow--Klemperer perspective~\cite{BK,EFFTW}. As a by-product, in the standard \emph{unaugmented} model, MinWork achieves a
\(
\left(1+\frac{\Delta_1(m-1)}{2}\right)\text{-approximation}
\)
to the first-best benchmark for every prior.
\end{abstract}

%%%%%%%%%%%%%%%%%%%%%%%%%%%%%%%%%%%%%%%%%%%%%%%%%%%%%%%%%%%%%%%%%
\section{Introduction}
Unrelated-machines makespan minimization is a canonical problem in algorithmic mechanism design. There are \(m\) machines and \(n\) jobs. Machine \(i\)'s private type is its vector of processing times for the \(n\) jobs, and the objective is to minimize the makespan, i.e., the maximum load over machines. If the processing-time matrix were publicly known, this would be a classical scheduling problem. The mechanism-design challenge is that each row of the matrix is privately held by a strategic machine.

The seminal work of Nisan and Ronen initiated the study of truthful unrelated-machines scheduling in algorithmic mechanism design~\cite{NR}. Their MinWork mechanism assigns each job to a machine reporting the smallest processing time for that job and uses VCG payments. It is truthful, meaning that reporting the true processing times is a dominant strategy for every machine. 
Nisan and Ronen showed that MinWork is an \(m\)-approximation, proved a lower bound of \(2\), and conjectured that no deterministic truthful mechanism could achieve an approximation ratio below \(m\); this conjecture was recently proved by Christodoulou, Koutsoupias, and Kov\'acs~\cite{CKK},  closing a long-standing open problem in prior-free algorithmic mechanism design.

This sharp prior-free barrier motivates asking how prior information, Bayesian incentive compatibility, randomization, and resource augmentation change the picture. With a known prior and BIC, truthful reporting maximizes each machine’s expected utility conditional on its own type. To what extent can stronger makespan guarantees be obtained?  Are randomized BIC scheduling mechanisms strictly more powerful than deterministic BIC mechanisms? Finally, can adding more machines help the simple truthful MinWork mechanism approach the original first-best benchmark, that is, the minimum makespan achievable when the realized processing times are known and incentive constraints are ignored?

\paragraph{A \(4/3\) Bayesian Lower Bound.}
Our first result is an asymptotic \(4/3\) lower bound for randomized BIC mechanisms.
We construct a family of two-machine, two-job, two-type-per-machine product priors for which the randomized-BIC-to-first-best ratio approaches \(4/3\).
This strengthens the 1.2 lower-bound of Mu'alem and Schapira for deterministic-BIC mechanisms~\cite{MS} in three aspects: it applies to randomized rather than only deterministic BIC mechanisms, the  asymptotic lower-bound constant increases from 1.2 to \(4/3\), and the hard instance uses only two jobs, rather than three.
As a warm-up, we show that the same hard instance already yields an asymptotic \(4/3\) lower bound for \emph{every} deterministic BIC mechanism. This class includes, in particular, all deterministic dominant-strategy truthful mechanisms. Consequently, the same \(4/3\) lower bound applies to weighted-VCG rules and affine minimizers, even when their report-independent weights and offsets are chosen as a function of the prior.~\footnote{In the usual value-maximization convention, these mechanisms are known as affine maximizers or weighted VCG mechanisms~\cite{Roberts1979}; in our cost-minimization setting, the corresponding rule is an affine minimizer.} 

To place the $4/3$ lower bound in context, prior-free truthful mechanisms provide nontrivial upper bounds that also apply in the Bayesian setting. For two machines and two jobs, the best known randomized truthful upper bound is  \(1.5059964\)~\cite{KV}.

\paragraph{The Value of Randomization: A BIC Integrality Gap.}
Randomization is already known to improve truthful approximation guarantees in the prior-free model. For two machines, Nisan and Ronen established a tight  bound of 2 for deterministic truthful mechanisms, while also giving a randomized truthful \(7/4\)-approximation~\cite{NR}. This raises a natural question: can randomization also strictly improve the optimal BIC makespan? 
Our second result answers this question affirmatively.
 We construct  a two-machine, two-job, two-type product-prior  that   yields
$ \frac{\OPT_{\DetBIC}}{\OPT_{\RandBIC}}\ge \frac87. $
Thus, randomization remains valuable even within the weaker BIC solution concept. The separation already appears in a simple environment, regardless of computational  considerations. 
This parallels the role of lotteries in multidimensional revenue maximization, where randomization can be essential for optimality and can strictly improve revenue~\cite{MV,BCKW}.~\footnote{Importantly, the deterministic-versus-randomized-BIC integrality gap and the randomized-BIC-versus-first-best approximation ratio measure different phenomena and need not be witnessed by the same prior.}

\paragraph{Resource augmentation for MinWork.}
Our third result is naturally framed in the language of competition
complexity~\cite{BK,EFFTW}. We ask how much additional machine-side
competition is needed for the simple truthful MinWork mechanism to approach
the first-best benchmark in the original \(m\)-machine market. We sample
\(r\) independent processing-time profiles from the original prior, creating
\(mr\) destination slots, and run MinWork in this replicated sampled environment.
Each slot is treated as a distinct strategic machine. Since MinWork minimizes
total reported processing cost, standard VCG payments make it
dominant-strategy truthful, independently of correlations in the prior.
The key quantity is how often two jobs end up at the same destination, which we measure by \(\Delta_r\). When these collisions are sufficiently rare, \(O(m/\varepsilon)\) layers (equivalently \(O(m/\varepsilon)\) additional replicas per machine) suffice for MinWork to achieve a \((1+\varepsilon)\)-approximation to the original first-best benchmark , and this dependence is asymptotically tight in the worst case.

This result is related to, but different from, the  Bulow--Klemperer-style scheduling result  of Chawla, Hartline, Malec, and Sivan~\cite{CHMS}. 
Their approach  first obtains a guarantee against a relaxed benchmark with fewer machines and then converts this guarantee back to the original 
\(m\)-machine benchmark under additional assumptions, such as MHR.  
Our analysis instead compares replicated MinWork directly with the original \(m\)-machine first best, so no MHR assumption is required.

Finally, setting \(r=1\) in our theorem gives an upper bound for  MinWork that holds for every prior: its expected approximation ratio is at most
$1+\frac{\Delta_1(m-1)}{2}$. This complements Giannakopoulos and Kyropoulou~\cite{GK}: their sharper analysis relies on stronger independence assumptions, whereas our bound applies more generally. Under their independent machine-identical setting, uniform tie-breaking gives \(\Delta_1=1\),
so our bound becomes \((m+1)/2\), which is tight  for two machines. 
For larger \(m\), their bound can be substantially sharper.

%%%%%%%%%%%%%%%%%%%%%%%%%%%%%%%%%%%%%%%%%%%%%%%%%%%%%%%%%%%%%%%%%
\subsection{Related Work}
Nisan and Ronen~\cite{NR} introduced unrelated-machines scheduling as a canonical problem in algorithmic mechanism design, asking how well makespan can be minimized while preserving truthfulness. In the standard prior-free model, no distribution over processing times is assumed. 
The long-standing Nisan--Ronen conjecture was recently resolved by Christodoulou, Koutsoupias, and Kov\'acs~\cite{CKK}, establishing that, for \(m\) unrelated machines, no deterministic dominant-strategy truthful mechanism can achieve an approximation ratio below \(m\).

 Randomization has long played an important role in truthful scheduling. Nisan and Ronen gave a randomized truthful 7/4-approximation for two unrelated machines~\cite{NR}; their paper credits Daniel Lehmann with the refined case analysis establishing the 7/4 bound. Lu and Yu improved this to 1.6737 and also obtained a 0.8368\(m\)-approximation for general \(m\)~\cite{LY}. Chen, Du, and Zuluaga developed copula-based randomized truthful mechanisms~\cite{CDZ}. They obtain an approximation ratio of at most \(1.58606\) for two machines in general and \(1.5067711\) for the two-machine, two-job case. Kuryatnikova and Vera subsequently improved the two-job upper bound to \(1.5059964\) and obtained a nearly matching lower bound within their randomized scale-free mechanism class~\cite{KV}.  Lavi and Swamy study a restricted domain in which every processing time is either low or high and  give a black-box transformation from any \(c\)-approximation algorithm to a truthful-in-expectation \(3c\)-approximation mechanism~\cite{LSched}. On the lower-bound side, Mu'alem and Schapira~\cite{MS} proved a \(2-1/m\) lower bound for prior-free randomized truthful scheduling. Their lower bound holds even for the weaker requirement of truthfulness in expectation.

Daskalakis and Weinberg study Bayesian truthful scheduling and give polynomial-time Bayesian mechanisms within a factor 2 of the optimal BIC mechanism~\cite{DW}. 
Their benchmark is the optimal BIC mechanism, whereas our  bounds measure the gap between BIC and the unconstrained first best.

The classical Bulow--Klemperer theorem shows that in single-item i.i.d. regular environments, second-price auction  with one additional bidder earns at least as much expected revenue as the optimal revenue-maximizing Bayesian auction with the original bidders~\cite{BK}.  Eden et al.~\cite{EFFTW} formalized this perspective through competition complexity, the number of additional bidders required for VCG to match or exceed the revenue of the optimal Bayesian mechanism with the original bidders.

Chawla, Hartline, Malec, and Sivan~\cite{CHMS} study prior-independent truthful scheduling under machine symmetry. For machine-symmetric product priors with MHR runtimes, their bounded-overload BIC mechanism achieves an \(O(n/m)\)-approximation, without requiring knowledge of the processing-time distributions.
Giannakopoulos and Kyropoulou~\cite{GK} take a different approach and analyze the Bayesian performance of MinWork: each job is assigned to the machine on which its processing time is smallest. They assume independent processing times that are identically distributed across machines for each job. Under this symmetry, the minimizing machine of each job is uniformly distributed, turning MinWork into a weighted balls-into-bins process. Using this connection, they prove an
$ O\!\left(\frac{\log m}{\log\log m}\right) $
approximation for \(m\) machines. 

%%%%%%%%%%%%%%%%%%%%%%%%%%%%%%%%%%%%%%%%%%%%%%%%%%%%%%%%%%%%%%%%%
\section{Model and Preliminaries}
\label{sec:model} 
For a positive integer \(q\), let \([q]=\{1,\ldots,q\}\).
There are \(m \ge 2 \) machines and \(n \ge 1 \) jobs. Let
\(
T=(T_{ij})_{i\in[m],\,j\in[n]}
\)
denote the random processing-time matrix, where \(T_{ij}\ge0\) is the
processing time of job \(j\) on machine \(i\).  Machine \(i\)'s random type is
its row \(T_i=(T_{ij})_{j\in[n]}\); we write \(t\)  for a realization of \(T\) and \(t_i\) for its \(i\)-th row.

For the lower-bound results in Sections~\ref{sec:four-thirds-randomized-bic-lb} and~\ref{sec:det-rand-separation}, 
the known prior is a product distribution,
\(
F=F_1\times\cdots\times F_m,
\)
where \(F_i\) is a distribution over \(\mathbb R_+^n\). 
Thus machine types are independent across machines, 
while coordinates within a machine's type vector may be arbitrarily correlated across jobs. Section~\ref{sec:resource-augmentation} instead  allows  an arbitrary  joint prior over \(T\). 

A randomized Bayesian mechanism consists of an allocation rule and payments.
Given a reported profile \(\widehat t\), it outputs a possibly randomized allocation and a payment vector
\((X(\widehat t),p(\widehat t))\), where every realization of \(X(\widehat t)\)
assigns each job to exactly one machine. For a realized allocation \(x\),
let \(x_j\) denote the machine receiving job \(j\). The load of machine \(i\)
is
\(
L_i(x,t_i)=\sum_{j=1}^n t_{ij}\one\{x_j=i\},
\)
and the makespan is
\(
\Cmax(x,t)=\max_{i\in[m]}L_i(x,t_i).
\)
The first-best optimum is
\(
\OPT_m(t)=\min_x\Cmax(x,t),
\)
where the minimum is over deterministic allocations.

\begin{lemma}[Randomization does not improve the first-best optimum]
\label{lem:randomized-opt}
For every realized processing-time matrix \(t\),
\[
\min_{\substack{\text{randomized}\\ \text{allocations }X}}
\E_X[\Cmax(X,t)]
=
\OPT_m(t).
\]
\end{lemma}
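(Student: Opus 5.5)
The plan is to prove the two inequalities separately, using the fact that for a fixed realized matrix \(t\) the set of deterministic allocations is finite. There are exactly \(m^n\) maps \(x:[n]\to[m]\), so a randomized allocation \(X\) is just a probability vector \((\lambda_x)_x\) over this finite set. Its expected makespan is
\[
\E_X[\Cmax(X,t)]=\sum_x \lambda_x \Cmax(x,t).
\]
Here \(\Cmax(\cdot,t)\) is evaluated pointwise on each realization. This matches the convention in the model that every realization of \(X\) assigns each job to exactly one machine.

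For the inequality \(\le\), I take \(x^*\) attaining \(\OPT_m(t)=\min_x\Cmax(x,t)\). The minimum exists because the set is finite. The point mass on \(x^*\) is a valid randomized allocation with expected makespan \(\OPT_m(t)\), so the minimum over randomized allocations is at most \(\OPT_m(t)\).

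For the inequality \(\ge\), I use that \(\Cmax(x,t)\ge\OPT_m(t)\) holds for every deterministic \(x\) by the definition of \(\OPT_m(t)\). Averaging against any weights \(\lambda_x\ge0\) with \(\sum_x\lambda_x=1\) then gives
\[
\sum_x\lambda_x\Cmax(x,t)\ge\OPT_m(t).
\]
A minimum of a linear function over the simplex is attained at a vertex, and that is the whole content. Combining the two inequalities also shows the randomized infimum is attained, by the point mass on \(x^*\), so writing "min" is justified.

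The only step that needs care is being explicit that the objective is the expectation of the makespan, not the makespan of expected loads. I will remark on this because the fractional relaxation \(\max_i \E[L_i]\) could be strictly smaller than \(\OPT_m(t)\); for instance, with one job and two identical machines it halves the value. So the lemma relies on \(\Cmax\) being evaluated realization by realization, which is the convention fixed in the preliminaries. Beyond noting this, there is no real obstacle.
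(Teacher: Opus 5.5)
Your proposal is correct and follows the same argument as the paper. The expected makespan of a distribution over deterministic allocations is at least \(\OPT_m(t)\), and a point mass on an optimal deterministic allocation attains it. Your added points on finiteness of the allocation set, and on evaluating the makespan realization by realization rather than via \(\max_i \E[L_i]\), are accurate clarifications that the paper leaves implicit.
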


\begin{proof}
A randomized allocation is a distribution over deterministic allocations, so its expected makespan cannot be smaller than \(\OPT_m(t)\). Conversely, an optimal deterministic allocation is a degenerate randomized allocation and achieves \(\OPT_m(t)\).
\end{proof}

Accordingly, our benchmark is the expected first best, \(\E_{T\sim F}[\OPT_m(T)]\).
A mechanism is Bayesian incentive compatible (BIC) if truthful reporting maximizes every machine's interim expected utility. Formally, for every machine \(i\), every true realized type \(t_i\) and every alternative report \(\widehat t_i\), let \(T_{-i}\) denote the types of all machines other than machine \(i\). Then
\begin{align}
\E\!\left[p_i(t_i,T_{-i})-L_i(X(t_i,T_{-i}),t_i)\right]
\ge
\E\!\left[p_i(\widehat t_i,T_{-i})-L_i(X(\widehat t_i,T_{-i}),t_i)\right].
\label{eq:bic-definition}
\end{align}
The expectation is over the conditional distribution of the other machines' types given machine \(i\)'s true type, and over the internal randomness of the mechanism.  Payments are quasilinear: utility equals payment minus processing cost. 

For priors with \(\E[\OPT_m(T)]>0\), the approximation ratio of a BIC mechanism
\(M=(X,p)\), evaluated under truthful reporting, is
\(
\frac{\E[\Cmax(X(T), \; T)]}{\E[\OPT_m(T)]},
\)
where the expectation in the numerator is also taken over the mechanism's internal randomness.

The MinWork allocation rule assigns each job to a machine with minimum reported processing time. Equivalently, it minimizes the total reported processing time over all allocations. Therefore, together with the standard VCG payments, MinWork is dominant-strategy truthful (hence BIC) and ex-post individually rational~\cite{NR}.~\footnote{A mechanism is ex-post individually rational (IR) if truthful reporting gives every machine nonnegative utility for every realized type profile  and every realization of the mechanism's randomness.} No tie-breaking convention is needed for our lower bounds. In
Section~\ref{sec:resource-augmentation}, ties are broken uniformly among
minimizing destinations, independently across jobs, using report-independent
randomness. Our lower bounds use only BIC and therefore continue to hold when  individual rationality is additionally imposed.

%%%%%%%%%%%%%%%%%%%%%%%%%%%%%%%%%%%%%%%%%%%%%%%%%%%%%%%%%%%%%%%%%%%%%%%%%%%%%%
\section{The \texorpdfstring{$4/3$}{4/3} Randomized-BIC Lower Bound}
\label{sec:four-thirds-randomized-bic-lb}

This section establishes our main impossibility theorem. Moving from prior-free randomized truthfulness to Bayesian incentive compatibility can potentially improve performance, but our construction shows that even randomized BIC mechanisms still face an asymptotic \(4/3\) gap from the first-best benchmark. The hard instance is a product prior with only two machines, two jobs, and two types per machine.
This strengthens the 1.2 deterministic-BIC lower bound of Mu'alem and Schapira~\cite{MS} in three aspects: the impossibility applies to randomized rather than deterministic BIC mechanisms, raises the asymptotic constant from \(1.2\) to \(4/3\), and reduces the number of jobs from three to two. We first present the deterministic version as a warm-up and then prove the randomized theorem on the same hard instance. 

\paragraph{The hard instance.}
We use a simple product prior in which each machine independently takes one of two types, each with probability \(1/2\).
Fix  \(0<\delta < 1/4 \)  and set
\(
H=\frac4\delta.
\)
Machine \(1\) has the following two types, specifying its processing times for the two jobs:
\[
        a=(1,\; \delta),
        \qquad
        b=(1-\delta,\; 2).
\]
Machine \(2\) has two possible types,  specifying its processing times for the two jobs:
\[
        c=(2,\; H),
        \qquad
        d=(2,\; H/2).
\]
The prior is the product of these two marginals.
We show that every randomized BIC mechanism has approximation ratio at least $ \frac{4}{3+\delta}$, approaching \(4/3\) as \(\delta\to0\).

\paragraph{The first-best optimum.}
If machine \(1\) has type \(a=(1,\; \delta)\), the first-best optimum  assigns both jobs to
machine \(1\). The makespan is
\(
        1+\delta.
\)
Indeed, assigning job \(1\) to machine \(2\) already creates load \(2\), and
assigning job \(2\) to machine \(2\) creates load at least \(H/2\).
If machine \(1\) has type \(b=(1-\delta,\; 2)\), the first-best optimum assigns job \(1\) to
machine \(2\) and job \(2\) to machine \(1\). The resulting loads are \(2\) and
\(2\), so the makespan is
\(
        2.
\)
Giving both jobs to machine \(1\) gives load \(3-\delta>2\), and assigning job
\(2\) to machine \(2\) creates load at least \(H/2\).
Therefore, by Lemma~\ref{lem:randomized-opt}
\begin{equation}
\begin{aligned}
        \E[\OPT_2]
        &=
        \frac12(1+\delta)+\frac12\cdot2   \; = \; \frac{3+\delta}{2}.   \\
\end{aligned}
\label{eq:four-thirds-opt}
\end{equation}

\subsection{The switch inequality}
We now prove the common incentive constraint used in both lower bounds.
Let \(M\) be a randomized or deterministic BIC mechanism.  
For a reported type \(r\in\{a,b\}\) of machine \(1\), define the
following interim allocation probabilities, where the probability is over
machine \(2\)'s type and over the internal randomness of \(M\):
\begin{itemize}
\item 
\(
        B_r
        =
        \Prb[\text{both jobs are assigned to machine }1 \mid r],
\)
\item 
\(
        S_r
        =
        \Prb[\text{job }1\text{ is assigned to machine }1
        \text{ and job }2\text{ is assigned to machine }2 \mid r],
\)
\item
\(
        D_r
        =
        \Prb[\text{job }2\text{ is assigned to machine }2 \mid r].
\)

\end{itemize}

Clearly,  \(S_r\le D_r\). For brevity, write
\(
        B=B_a,\ B'=B_b,\
        S=S_a,\ S'=S_b,\
        D=D_a,\ D'=D_b.
\)
Then
\[
        \Prb[\text{machine }1\text{ receives job }1\mid a]=B+S,
\]
and
\[
        \Prb[\text{machine }1\text{ receives job }2\mid a]=1-D.
\]

Let \(P_a\) and \(P_b\) denote machine \(1\)'s interim expected payments when
it reports \(a\) and \(b\), respectively: 
\[
P_r
=
\E_{T_2,M}\!\left[p_1(r,T_2)\right],
\qquad r\in\{a,b\}.
\]
If the true type is \(a=(1,\; \delta)\), truthful reporting must give no larger
interim cost minus payment than reporting \(b\). Hence
\begin{equation}
(B+S)+\delta(1-D)-P_a
        \le
        (B'+S')+\delta(1-D')-P_b.
\label{eq:bic-a}
\end{equation}
If the true type is \(b=(1-\delta,\; 2)\), truthful reporting must give no larger
interim cost minus payment than reporting \(a\). Hence
\begin{equation}
(1-\delta)(B'+S')+2(1-D')-P_b
        \le
        (1-\delta)(B+S)+2(1-D)-P_a.
\label{eq:bic-b}
\end{equation}

\begin{claim}[Switch inequality]
\label{claim:switch-inequality}
Every BIC mechanism, deterministic or randomized, satisfies
\begin{equation}
\delta\bigl((B+S)-(B'+S')\bigr)
        \le
        (2-\delta)(D'-D).
\label{eq:switch-inequality}
\end{equation}
\end{claim}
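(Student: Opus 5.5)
Add \eqref{eq:bic-a} and \eqref{eq:bic-b}; this is the standard two-type cycle-monotonicity argument. The payments \(P_a\) and \(P_b\) appear once on each side with opposite signs, so they cancel. This cancellation is what lets the claim hold for every BIC mechanism, deterministic or randomized, whatever the payments. All that remains is a linear inequality in the interim allocation probabilities \(B,S,D,B',S',D'\).

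Concretely, write \(J=B+S\) and \(J'=B'+S'\) for the interim probabilities that machine \(1\) receives job \(1\) under reports \(a\) and \(b\). Summing the two constraints gives
\[
J+\delta(1-D)+(1-\delta)J'+2(1-D')\le J'+\delta(1-D')+(1-\delta)J+2(1-D).
\]
Now collect terms. The job-\(1\) terms leave \(J-(1-\delta)J\) on the left and \(J'-(1-\delta)J'\) on the right, i.e.\ \(\delta J\) versus \(\delta J'\). The job-\(2\) terms compare \(\delta(1-D)+2(1-D')\) with \(\delta(1-D')+2(1-D)\). Their difference, left minus right, is \((2-\delta)\bigl((1-D')-(1-D)\bigr)=(2-\delta)(D-D')\). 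Moving it to the right-hand side gives
\[
\delta(J-J')\le(2-\delta)(D'-D),
\]
which is \eqref{eq:switch-inequality}.

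There is no real obstacle here. The only step needing care is the bookkeeping of signs when the job-\(2\) terms are moved across, to make sure the result is \(D'-D\) rather than \(D-D'\). A sanity check confirms the orientation: type \(b\) has the lower cost on job \(1\) and the higher cost on job \(2\) relative to \(a\). Monotonicity should therefore push reporting \(b\) toward more of job \(1\) and less of job \(2\) for machine \(1\). That means larger \(D'\), which matches the inequality. Note that the bound \(S_r\le D_r\) is not needed for this claim; it is reserved for the later use of the claim.
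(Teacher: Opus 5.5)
Your proposal is correct and takes the same approach as the paper: you add \eqref{eq:bic-a} and \eqref{eq:bic-b}, cancel the payments, and rearrange, and your sign bookkeeping for the job-2 terms is correct. You are also right that \(S_r\le D_r\) is not needed for this claim; it is used only later, in the proof of Theorem~\ref{thm:randomized-four-thirds}.
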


\begin{proof}
 Adding  \eqref{eq:bic-a} and \eqref{eq:bic-b}, the payment terms cancel out. Rearranging the remaining terms yields the inequality:
\(
        \delta\bigl((B+S)-(B'+S')\bigr)
        \le
        (2-\delta)(D'-D) = (2-\delta) ((1-D) - (1-D')) .
\)
\end{proof}

The switch inequality captures the incentive tradeoff between the two jobs. If machine \(1\) is more likely to receive job \(1\) under \(a\) than under \(b\), BIC requires it also to receive job \(2\) more often under \(a\). This is equivalent to job \(2\) being assigned to machine \(2\) more often under \(b\), which is costly because its processing time there is at least \(H/2\).

\subsection{The Deterministic Lower Bound}

Before proving the randomized lower bound, we give a deterministic warm-up. The same construction yields an asymptotic \(4/3\) lower bound for \emph{every} deterministic BIC mechanism, improving the previous \(1.2\) bound~\cite{MS}. This class includes, in particular, all deterministic dominant-strategy truthful mechanisms. Consequently, the same \(4/3\) lower bound applies to weighted-VCG rules and, more generally, to affine minimizers with positive report-independent weights and report-independent allocation offsets. These parameters may depend on the known prior, provided they are fixed before the realized processing times are observed. Thus, even prior-dependent reweighting of VCG cannot circumvent the asymptotic \(4/3\) lower bound.   

\begin{claim}[Deterministic BIC lower bound]
\label{clm:deterministic-four-thirds}
For the product prior above, every deterministic BIC mechanism \(M\) satisfies
\[
        \frac{\E[\Cmax(M)]}{\E[\OPT_2]}
        \ge
        \frac4{3+\delta}.
\]
Hence, as \(\delta\to 0\), the lower bound approaches \(4/3\).
\end{claim}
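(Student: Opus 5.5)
The plan is to show directly that every deterministic BIC mechanism satisfies \(\E[\Cmax(M)]\ge 2\). Combined with \eqref{eq:four-thirds-opt}, \(\E[\OPT_2]=(3+\delta)/2\), this gives the ratio \(4/(3+\delta)\). I will work with the interim quantities \(B,S,D,B',S',D'\) of machine \(1\). For a deterministic mechanism each of these is an average over the two equally likely types \(c,d\) of machine \(2\), so it lies in \(\{0,\tfrac12,1\}\).

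\textbf{Step 1: per-profile makespan bounds.} I will record the makespan of each possible allocation.
\begin{itemize}
\item Under type \(a\), the makespan is \(1+\delta\) if both jobs go to machine \(1\). Otherwise it is at least \(2\): job \(1\) on machine \(2\) costs \(2\), and job \(2\) on machine \(2\) costs at least \(H/2\).
\item Under type \(b\), every allocation has makespan at least \(2\). Both jobs on machine \(1\) cost exactly \(3-\delta\), and any allocation putting job \(2\) on machine \(2\) costs at least \(H/2=2/\delta\).
\end{itemize}
Taking expectations,
\[
\E[\Cmax]\ \ge\ \tfrac12\bigl[2-(1-\delta)B\bigr]+\tfrac12\bigl[2+(1-\delta)B'+(H/2-2)D'\bigr].
\]
Here the \(B'\) and \(D'\) events are disjoint, so their excess costs add.

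\textbf{Step 2: case \(D'>0\).} Then \(D'\ge\tfrac12\), so the last term contributes at least \(\tfrac12\cdot\tfrac12(2/\delta-2)=(1-\delta)/(2\delta)\). Since \(\delta<1/4\), this exceeds the maximal possible saving \((1-\delta)B/2\le(1-\delta)/2\). Hence \(\E[\Cmax]\ge2\) in this case.

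\textbf{Step 3: case \(D'=0\).} Then \(S'\le D'=0\), and the switch inequality (Claim~\ref{claim:switch-inequality}) gives
\[
\delta\bigl((B+S)-B'\bigr)\le-(2-\delta)D\le0 .
\]
So \(B\le B+S\le B'\). The bound from Step 1 then yields \(\E[\Cmax]\ge\tfrac12[2-(1-\delta)B]+\tfrac12[2+(1-\delta)B]=2\).

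The only delicate point is the case split in Step 2. Integrality makes a positive \(D'\) cost at least a constant multiple of \(H\), which swamps the bounded saving available under type \(a\). This is exactly where determinism is used. The rest is routine bookkeeping to make sure the excess costs under \(b\) are counted on disjoint events.
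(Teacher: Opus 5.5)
Your proof is correct and follows essentially the paper's argument: by determinism, any positive probability of sending job 2 to machine 2 means this happens on a whole profile of probability \(1/4\), costing order \(H\); otherwise the switch inequality gives \(B\le B'\), hence \(\E[\Cmax(M)]\ge 2\). The differences are minor: the paper splits on \(D>0\) or \(D'>0\) and uses the crude bound \(\tfrac14\cdot\tfrac H2\ge 2\), whereas you split on \(D'\) alone, weigh the excess \((1-\delta)/(2\delta)\) against the maximal saving \((1-\delta)/2\), and let the switch inequality absorb any \(D>0\) through its favorable sign, much as the paper's randomized proof does.
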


Full proof in Appendix~\ref{sec:appendix-deterministic-four-thirds}.

\subsection{The Randomized Lower Bound}

We now prove the main theorem.

\begin{theorem}[Randomized BIC lower bound]
\label{thm:randomized-four-thirds}
For the product prior above, every randomized BIC mechanism \(M\) satisfies
\[
        \frac{\E[\Cmax(M)]}{\E[\OPT_2]}
        \ge
        \frac4{3+\delta}.
\]
Hence, as \(\delta\to 0\), the lower bound approaches \(4/3\).

\end{theorem}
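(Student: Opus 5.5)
The plan is to bound the expected makespan conditional on each type of machine~1 by linear expressions in the interim quantities \(B,B',D,D'\). I will then combine them with the switch inequality (Claim~\ref{claim:switch-inequality}) to show \(\E[\Cmax(M)]\ge 2\). Dividing by \(\E[\OPT_2]=(3+\delta)/2\) from \eqref{eq:four-thirds-opt} then gives the ratio \(4/(3+\delta)\). The argument uses only interim probabilities, so it covers randomized mechanisms directly.

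\emph{Step 1: conditional lower bounds.} Fix machine 1's type and classify every realized allocation into one of three events: both jobs on machine 1; job 1 on machine 2 and job 2 on machine 1; or job 2 on machine 2. The last event has makespan at least \(H/2\), since machine 2's job-2 time is \(H\) or \(H/2\). For type \(a\) the three events have probabilities \(B\), \(1-D-B\), \(D\), and makespans at least \(1+\delta\), \(2\), \(H/2\). This gives
\[
\E[\Cmax\mid a]\ \ge\ 2-(1-\delta)B+(H/2-2)D .
\]
For type \(b\) the first two events have makespans \(3-\delta\) and \(2\). This gives
\[
\E[\Cmax\mid b]\ \ge\ 2+(1-\delta)B'+(H/2-2)D' .
\]
Averaging, it suffices to show
\[
(1-\delta)(B-B')\ \le\ (H/2-2)(D+D') .
\]

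\emph{Step 2: use the switch inequality.} Since \(S\ge0\) and \(S'\le D'\), we have \(B-B'\le \bigl((B+S)-(B'+S')\bigr)+D'\). Claim~\ref{claim:switch-inequality} then gives
\[
B-B'\ \le\ \frac{2-\delta}{\delta}(D'-D)+D' .
\]
Multiplying by \(1-\delta>0\), the coefficient of \(D'\) becomes \((1-\delta)\bigl(\tfrac{2-\delta}{\delta}+1\bigr)=\tfrac{2(1-\delta)}{\delta}=\tfrac{2}{\delta}-2\). With \(H=4/\delta\), this equals \(H/2-2\) exactly. The coefficient of \(D\) is \(-(1-\delta)(2-\delta)/\delta\le 0\), while \((H/2-2)D\ge 0\) because \(\delta<1/4\). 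Hence the required inequality holds and \(\E[\Cmax(M)]\ge 2\).

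The main obstacle is Step~2: the switch inequality controls \((B+S)-(B'+S')\) rather than \(B-B'\). The key point is that the \(S\) terms can be absorbed using \(S\ge 0\) and \(S'\le D'\), and that the choice \(H=4/\delta\) makes the resulting \(D'\) coefficient match exactly. I would double-check that tightness, since it is what yields the constant \(4/(3+\delta)\) rather than something weaker.
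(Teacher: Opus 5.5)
Your proposal is correct and follows essentially the same route as the paper. Both arguments use the same three-event conditional bounds, average them, absorb the \(S\)-terms into the switch inequality via \(S\ge0\) and \(S'\le D'\), and rely on the exact cancellation of the \(D'\) coefficient given by \(H=4/\delta\) to conclude \(\E[\Cmax(M)]\ge 2\). The only difference is presentational: the paper substitutes the equivalent lower bound \(B'-B\ge \frac{2-\delta}{\delta}D-\frac{2}{\delta}D'\) into the averaged cost, whereas you verify the upper bound on \((1-\delta)(B-B')\) directly.
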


\begin{proof}
Let \(M\) be any randomized BIC mechanism.
We first lower-bound the expected makespan conditional on machine \(1\)'s type.
Suppose machine \(1\) has type \(a=(1,\delta)\). If both jobs are assigned to
machine \(1\), the makespan is \(1+\delta\). If job \(2\) is assigned to
machine \(2\), the makespan is at least \(H/2\). The remaining allocation
assigns job \(1\) to machine \(2\) and job \(2\) to machine \(1\), and has
makespan \(2\). Therefore
\begin{equation}
\begin{aligned}
        \E[\Cmax(M) \mid a]
        &\ge
        B(1+\delta) + D\frac{H}{2} + 2(1-B-D)       \\
        &=
        2-(1-\delta)B+\left(\frac{H}{2}-2\right)D.
\end{aligned}
\label{eq:rand-a-cost}
\end{equation}
Similarly, suppose machine \(1\) has type \(b=(1-\delta,\; 2)\). If both jobs are
assigned to machine \(1\), the makespan is \(3-\delta\). If job \(2\) is
assigned to machine \(2\), the makespan is at least \(H/2\). The remaining
allocation assigns job \(1\) to machine \(2\) and job \(2\) to machine \(1\),
and has makespan \(2\). Therefore
\begin{equation}
\begin{aligned}
        \E[\Cmax(M) \mid b]
        &\ge
       B'(3-\delta) + D'\frac{H}{2} + 2(1-B'-D')      \\
        &=
        2+(1-\delta)B'
        +\left(\frac{H}{2}-2\right)D'.
\end{aligned}
\label{eq:rand-b-cost}
\end{equation}
Averaging \eqref{eq:rand-a-cost} and \eqref{eq:rand-b-cost} gives
\begin{equation}
\E[\Cmax(M)]
        \ge
        2+\frac{1-\delta}{2}(B'-B)
        +\frac12\left(\frac{H}{2}-2\right)(D+D').
\label{eq:rand-avg-cost}
\end{equation}
Since \(S\ge0\) and \(S'\le D'\), we have
\(
        S-S'\ge -D'.
\) 
Rearranging the switch inequality \eqref{eq:switch-inequality} gives
\begin{equation}
B'-B
        \ge
        \frac{2-\delta}{\delta}D
        -
        \frac{2-\delta}{\delta}D'
        +
        S-S'.
\label{eq:rand-b-gap-raw}
\end{equation}

Therefore
\begin{equation}
B'-B
        \ge
        \frac{2-\delta}{\delta}D
        -
        \frac2\delta D'.
\label{eq:rand-b-gap}
\end{equation}

Since \(0<\delta<1\), the coefficient of \(B'-B\) in
\eqref{eq:rand-avg-cost} is positive. Substituting \eqref{eq:rand-b-gap} into \eqref{eq:rand-avg-cost}, and using
\(H=4/\delta\) gives
\[
\begin{aligned}
        \E[\Cmax(M)]-2
        &\ge
        \frac{1-\delta}{2}
        \left(
        \frac{2-\delta}{\delta}D
        -
        \frac2\delta D'
        \right)
        +
        \frac12\left(\frac2\delta-2\right)(D+D')        \\ \\
        &=
        \frac{1-\delta}{2}
        \left(
        \frac{2-\delta}{\delta}D
        -
        \frac2\delta D'
        \right)
        +
        \frac{1-\delta}{\delta}(D+D')                    \\ \\
        &=
        \frac{1-\delta}{2}
        \left(
        \frac{2-\delta}{\delta}D
        \right)
        +
        \frac{1-\delta}{\delta}D   \;  \ge \; 0.
\end{aligned}
\]

The last inequality follows from \(D\ge0\) and \(0<\delta<1\). 

Thus
\(
        \E[\Cmax(M)]\ge2.
\)
By \eqref{eq:four-thirds-opt},
\[
        \frac{\E[\Cmax(M)]}{\E[\OPT_2]}
        \ge
        \frac{2}{(3+\delta)/2}
        =
        \frac4{3+\delta}.
\]
\end{proof}

%%%%%%%%%%%%%%%%%%%%%%%%%%%%%%%%%%%%%%%%%%%%%%%%%%%%%%%%%%%%%%%%%%%%%%%%%%
\section{The Deterministic--Randomized BIC Separation}\label{sec:det-rand-separation}
The preceding lower bound leaves open a natural question: Can allowing randomization strictly reduce the minimum expected makespan achievable under BIC? In the prior-free setting, Nisan and Ronen showed that no deterministic truthful mechanism for two machines can achieve an approximation ratio below \(2\), while also giving a randomized truthful \(7/4\)-approximation~\cite{NR}. Our next result asks whether an analogous separation occurs within the BIC class. 
This question parallels the familiar role of lotteries in multidimensional mechanism design. In multidimensional revenue maximization, randomized allocations can be necessary for optimal revenue  and can strictly outperform deterministic BIC mechanisms~\cite{MV,BCKW}.  
We answer the question affirmatively by exhibiting a two-machine, two-job, two-type per machine product prior on which randomized BIC strictly outperforms deterministic BIC. Thus, randomization can strictly reduce the minimum expected makespan achievable subject to BIC.

Let \(\OPT_{\DetBIC}\) denote the best  expected makespan achievable by
deterministic BIC mechanisms, and let \(\OPT_{\RandBIC}\) denote the best 
expected makespan achievable by randomized BIC  mechanisms.
 
\begin{theorem}[Deterministic versus randomized BIC separation]
\label{thm:deterministic-vs-randomized}
For every \(\alpha<8/7\), there exists a two-machine, two-job, two-type-per-machine product prior such that
\[
    \frac{\OPT_{\DetBIC}}{\OPT_{\RandBIC}}
    \ge \alpha.
\]
Thus \(8/7\) is an asymptotic lower bound on the deterministic-versus-randomized BIC-integrality gap.
\end{theorem}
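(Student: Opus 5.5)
The plan is to construct the prior explicitly, modelled on the hard instance of Section~\ref{sec:four-thirds-randomized-bic-lb}, and to exploit one structural difference between the two classes. Machine~2 has two equally likely types. So for a \emph{deterministic} mechanism, each interim quantity \(B_r,S_r,D_r\) of machine~1 lies in \(\{0,\tfrac12,1\}\). A \emph{randomized} mechanism can choose these interim probabilities continuously. I will use the same two-type structure for machine~1 (types \(a,b\)) and machine~2 (types \(c,d\)), with parameters \(\delta\) and \(H\) to be tuned, and a possibly modified job-2 cost on machine~2. The aim is a prior in which the first best needs a type-dependent assignment of job~1, while the switch inequality (Claim~\ref{claim:switch-inequality}) forces any such separation to be paid for by sending job~2 to machine~2 with probability about \(\tfrac{\delta}{2-\delta}\) times the separation.

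\textbf{Step 1: reduce BIC to an interim condition.} With two types for machine~1, BIC for machine~1 is exactly the pair \eqref{eq:bic-a}--\eqref{eq:bic-b}. Given interim allocations satisfying the switch inequality, one can pick interim payments \(P_a,P_b\) satisfying both constraints: the constraints only restrict \(P_a-P_b\) to an interval, which is nonempty precisely when the switch inequality holds. The same argument applies to machine~2's incentive constraints. Both optimizations then become problems over interim allocation variables, with one switch-type inequality per machine, where \(\E[\Cmax]\) is written allocation by allocation as in \eqref{eq:rand-a-cost}--\eqref{eq:rand-b-cost}.

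\textbf{Step 2: upper bound \(\OPT_{\RandBIC}\).} I will exhibit an explicit randomized mechanism. It implements the first-best job-1 separation between \(a\) and \(b\), and under report \(b\) it sends job~2 to machine~2 with a small probability \(\varepsilon\approx\frac{\delta}{2-\delta}\), just enough to make the switch inequality tight. Its expected makespan is \(\E[\OPT_2]+O(\varepsilon H)\). I will choose \(H\) as a constant multiple of \(1/\delta\) so that this overhead is a controlled constant rather than vanishing. Adjusting this constant, together with possibly making job~2 on machine~2 cost differently under \(c\) and \(d\), gives the trade-off that will produce \(7\) in the denominator.

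\textbf{Step 3: lower bound \(\OPT_{\DetBIC}\).} For a deterministic mechanism, the switch inequality with \(D'-D\in\{-1,-\tfrac12,0,\tfrac12,1\}\) leaves only two options. Either \(D'-D\ge\tfrac12\), which costs at least \(\tfrac14\cdot\Theta(H)\) in expected makespan. Or \(D'\le D\), which forces \((B+S)\le(B'+S')\), so the mechanism cannot separate the job-1 assignment in the efficient direction. I will enumerate these finitely many cases, essentially as in the deterministic warm-up (Claim~\ref{clm:deterministic-four-thirds}), and lower bound the makespan in each using the conditional bounds of the form \eqref{eq:rand-a-cost}. Taking \(\delta\to0\) should give a ratio tending to \(8/7\), for example a deterministic value of \(2\) against a randomized value of \(7/4\) after normalization.

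The main obstacle is the joint tuning of parameters in Steps~2--3. The randomized overhead \(\varepsilon H/2\) must stay small while the deterministic penalty \(H/4\) stays large, and the first-best makespans for the two types must be balanced so that the ratio comes out exactly \(8/7\) and not some weaker constant. I will first try solving the small interim linear program symbolically in \(\delta\) and \(H\) to find the optimal randomized mechanism. Then I will choose \(H=\kappa/\delta\), with \(\kappa\) selected to maximize the ratio of the deterministic case minimum to the LP value.
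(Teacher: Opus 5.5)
Your route is correct, but it is not the paper's. The paper abandons the Section~\ref{sec:four-thirds-randomized-bic-lb} instance and builds a new prior ($A=(H,H+1)$, $B=(3H+1,H)$, $C=(4H+1,3H+1)$, $D=(2,2H+1)$). Its first best is unique, violates machine~1's BIC constraint by a constant margin, and every deterministic deviation on any profile costs at least $H$. This gives $\OPT_{\DetBIC}=(8H+3)/4$. A lottery at $(A,D)$ with probability $1/(2H+2)$ restores BIC at extra cost about $1/2$, so the ratio is $(8H+3)/(7H+3+\frac{H}{2H+2})\to 8/7$.

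You keep the $4/3$ instance and decouple $H$ from $\delta$, and this works more easily than you anticipate. Keep the first best everywhere except on profile $(b,d)$. There, with probability $1-\delta$ send job~1 to machine~2 and job~2 to machine~1, and with probability $\delta$ send job~1 to machine~1 and job~2 to machine~2. Then:
\begin{itemize}
\item The switch inequality holds with equality, so machine~1's payment interval is a single point.
\item Machine~2's constraint holds because it receives job~2 more often under $d$, whose job-2 cost $H/2$ is the smaller one.
\item The expected makespan is exactly $\tfrac32+\tfrac{\delta H}{8}$.
\end{itemize}
The argument of Claim~\ref{clm:deterministic-four-thirds} gives $\OPT_{\DetBIC}\ge2$ whenever $H\ge16$. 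Hence $H=2/\delta$ yields $7/4$ versus $2$ for every $\delta\le 1/8$, with no LP and no balancing.

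The randomized overhead scales like $\delta H$ and the deterministic penalty like $H$, so there is no real tension between them. In fact, fixing $H=16$ and letting $\delta\to0$ drives the ratio to $4/3$. Your expectation that the tuning must land exactly at $8/7$ is therefore wrong, though harmless for this statement. Theorem~\ref{thm:randomized-four-thirds}'s choice $H=4/\delta$ is precisely where the overhead reaches $1/2$ and the separation vanishes.

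The paper's instance gives an example independent of the $4/3$ construction. Yours reuses the switch inequality and the warm-up proof verbatim and gives a stronger constant.
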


Full proof in Appendix~\ref{sec:appendix-deterministic-vs-randomized}.

%%%%%%%%%%%%%%%%%%%%%%%%%%%%%%%%%%%%%%%%%%%%%%%%%%%%%%%%%%%%%%%%%%%%%%%%%%%%%%%%%%

\section{Resource Augmentation}
\label{sec:resource-augmentation}
Our next result focuses on  the standard MinWork mechanism and is
motivated by the notion of competition complexity. The underlying question is in the
spirit of Bulow and Klemperer: can a simple mechanism, given additional
competition, perform as well as a more sophisticated optimal Bayesian
mechanism in the original market? In the classical single-item setting
with i.i.d. regular values, Bulow and Klemperer show that the Vickrey
auction with one additional bidder obtains at least as much expected
revenue as the optimal revenue Bayesian auction with the original
bidders~\cite{BK}. Eden, Feldman, Friedler, Talgam-Cohen, and
Weinberg~\cite{EFFTW} extended this perspective to multidimensional
environments and introduced the notion of \emph{competition complexity},
measuring how many additional bidders are needed for a simple mechanism
to outperform the optimal BIC mechanism with the original bidders.

Our scheduling question is analogous, although our benchmark is stronger: we compare against the first-best optimum, rather than the second-best optimal BIC benchmark, in the original \(m\)-machine market.
We sample \(r\) independent processing-time profiles from the original
prior, calling each draw a \emph{layer}, and run MinWork on the resulting
\(mr\) destination slots, where the number of jobs remains unchanged. 
Thus \(r\) is the total number of layers, and
the augmentation adds \(r-1\) layers relative to the original market.
Let \(\MW^{(r)}\) denote MinWork in this replicated environment, with
ties broken uniformly at random among minimizing slots.

We ask how large \(r\) must be for MinWork to achieve a
\((1+\varepsilon)\)-approximation to the first-best optimum in the
original \(m\)-machine market. We show that the answer is governed by a pairwise
destination-collision parameter \(\Delta_r\). If there exists an  constant \(C_{\mathrm{ac}}\), independent of
\(m\) and \(r\), such that
\(
\Delta_r\le C_{\mathrm{ac}}
\ \text{for all }r\ge1,
\)
then \(O(m/\varepsilon)\) total layers suffice.
\subsection{Profile resampling and the original benchmark}
\label{sec:profile-resampling}
Fix an original market with \(m\) machines and \(n\) jobs. We use
\(i\in[m]\) for a machine index and \(j\in[n]\) for a job index. A
\emph{processing-time profile} is the complete matrix
\(
T=(T_{ij})_{i\in[m],\,j\in[n]}\sim F,
\)
where \(T_{ij}\ge 0\) is the processing time of job \(j\) on machine
\(i\), and \(F\) is an arbitrary joint distribution on
\(\mathbb{R}_+^{m\times n}\). We impose no independence across machines
or jobs within a profile.

For \(r\ge1\), the \emph{\(r\)-layer profile-resampling environment}
consists of \(r\) independent complete draws
\(
T^{(1)},\ldots,T^{(r)}
\overset{\mathrm{i.i.d.}}{\sim}F,
\)
where \(T^{(a)}=(T_1^{(a)},\ldots,T_m^{(a)})\), and  \(T_i^{(a)}=(T_{ij}^{(a)})_{j\in[n]}\) is the processing-time row
of machine coordinate \(i\) in layer \(a\). Each layer preserves the full
dependence structure encoded by \(F\), while different layers are
independent. Thus we resample entire processing-time profiles across all machines, rather
than machine rows independently. Here \(r\) counts total layers, so the
augmentation adds \(r-1\) layers, or \(m(r-1)\) strategic machines.

A \emph{destination slot} is a pair \(s=(i,a)\), where
\(i\in[m]\) is a machine coordinate and \(a\in[r]\) is a layer. The set
of destination slots is
\(
\mathcal S_r=[m]\times[r].
\)
Each destination slot \((i,a)\) has processing-time vector \(T_i^{(a)}=(T_{ij}^{(a)})_{j=1}^n\). We model every
destination slot as a distinct, separately owned strategic machine, so
the replicated market contains \(mr\) distinct strategic machines.

\paragraph{Tie-breaking convention.}
Throughout this section, ties are broken uniformly at random among
minimizing destination slots, independently across jobs and using
randomness independent of the reports. Equivalently, for each job we
may draw an independent uniform random ordering of the destination
slots before reports are submitted and select the first minimizing slot
in that ordering. When \(r=1\), the destination slots are the original
machines, so this reduces to uniform tie-breaking among the minimizing
machines.

MinWork assigns each job to a destination slot of minimum reported
processing time. Conditional on every realization of the tie-breaking
randomness, this is a deterministic total-work-minimizing allocation rule
with report-independent tie-breaking. Hence, for every realization of the tie-breaking randomness, procurement-VCG payments make truthful reporting a dominant strategy for every destination-slot agent~\cite{NR}. Therefore, the randomized mechanism is universally  truthful. This incentive guarantee is pointwise and
requires no distributional independence.

We henceforth evaluate the mechanism under truthful reporting. For each
job \(j\), define its random \emph{selected minimum processing time}
\begin{equation}
\label{eq:tau-r-definition}
\tau_j^{(r)}
=
\min_{i\in[m],\,a\in[r]} T_{ij}^{(a)},
\end{equation}
and let \(R_j^{(r)}\in\mathcal S_r\) denote its random destination under
MinWork, including the tie-breaking randomness. Write 
\(
\tau^{(r)}
=
(\tau_1^{(r)},\ldots,\tau_n^{(r)}),
\) 
and 
\(
R^{(r)}
=
(R_1^{(r)},\ldots,R_n^{(r)}).
\)
For every realization of the sampled profiles and tie-breaking
randomness, if \(R_j^{(r)}=s=(i,a)\), then \(s\) is a minimizing slot
for job \(j\). Hence the realized processing time of job \(j\) is the
realized value of \(\tau_j^{(r)}\). Consequently, the makespan of
\(\MW^{(r)}\) is the random variable
\begin{equation}
\label{eq:resampled-makespan}
\Cmax(\MW^{(r)})
=
\max_{s\in\mathcal S_r}
\sum_{j=1}^n
\tau_j^{(r)}
\one\{R_j^{(r)}=s\}.
\end{equation}

%%%%%%%%%%%%%%%%%%%%%%%%%%%%%

Our benchmark is the expected ex-post first-best makespan in the original
\(m\)-machine market,
\(
\E_{T\sim F}[\OPT_m(T)],
\)
rather than the optimum in the augmented \(mr\)-slot market.

For a nonnegative vector \(u=(u_1,\ldots,u_n)\), define
\begin{equation}
\label{eq:Am-definition}
A_m(u)
=
\max\left\{
\max_{j\in[n]}u_j,\;
\frac{1}{m}\sum_{j=1}^n u_j
\right\}.
\end{equation}
For a realized original profile \(t\), let
\(
\tau_j(t)=\min_{i\in[m]} t_{ij},
\ 
\tau(t)=(\tau_1(t),\ldots,\tau_n(t)).
\)

The quantity \(A_m(\tau(t))\) is a pointwise lower bound on the
first-best makespan. Indeed, under any allocation \(x\), job \(j\)
requires at least \(\tau_j(t)\) units of processing time on the machine
to which it is assigned, and hence
\(
\Cmax(x,t)\ge \max_{j\in[n]}\tau_j(t).
\)
Moreover, the total work performed by \(x\) is at least
\(\sum_j\tau_j(t)\). Since this work is distributed over \(m\) machines,
\(
\Cmax(x,t)
\ge
\frac{1}{m}\sum_{j=1}^n\tau_j(t).
\)
Taking the minimum over all allocations therefore gives
\begin{equation}
\label{eq:Am-lower-bound}
A_m(\tau(t))
\le
\OPT_m(t)
\qquad
\text{for every realized profile }t.
\end{equation}
Equivalently,
\(
A_m(\tau(T))
\le
\OPT_m(T).
\)

%%%%%%%%%%%%%%%%%%%%%%%%%
\paragraph{Pairwise destination anti-concentration.}
\label{par:destination-anti-concentration}

We next separate the two ingredients that determine the makespan of
MinWork. The vector \(\tau^{(r)}\) determines the processing times incurred
by the jobs, while their destinations determine how these processing times
are distributed across slots. We measure the latter through pairwise
destination collisions.
\pagebreak 
\begin{definition}[Pairwise destination-collision parameter]
\label{def:delta-r}
For \(n\ge2\), let \(\Delta_r\in[0,mr]\) be any number such that, for every
pair of distinct jobs \(j\ne j'\) and for  every realization
\(u\) of \(\tau^{(r)}\),
\begin{equation}
\label{eq:delta-r}
\Prb\!\left[
R_j^{(r)}=R_{j'}^{(r)}
\,\middle|\,
\tau^{(r)}=u
\right]
\le
\frac{\Delta_r}{mr}.
\end{equation}
\end{definition}

Two jobs collide only if they are sent to the same machine in the same layer. 
Conditioning on \(\tau^{(r)}\) fixes the processing time incurred by each
job under MinWork, leaving only the tie-breaking randomness. 
A valid collision parameter always exists: since conditional collision
probabilities are at most \(1\), the choice \(\Delta_r=mr\) is always
valid.~\footnote{For example, consider two machines and two jobs
(\(m=2\), \(r=1\)), and write
\(\tau=\tau^{(1)}\) and \(R_j=R_j^{(1)}\).
Suppose the processing-time matrix is, with equal probability,
\(
\begin{pmatrix}
1&1\\
1&1
\end{pmatrix}
\)
or
\(
\begin{pmatrix}
1&3\\
2&2
\end{pmatrix},
\)
where rows correspond to machines and columns to jobs.
In the first case, \(\tau=(1,1)\), and independent uniform tie-breaking
gives
\(
\Prb[R_1=R_2\mid\tau=(1,1)]=1/2.
\)
In the second case, \(\tau=(1,2)\), and the two jobs uniquely minimize on
different machines, so
\(
\Prb[R_1=R_2\mid\tau=(1,2)]=0.
\)
Thus, for every realization \(u\) of \(\tau\),
\(
\Prb[R_1=R_2\mid\tau=u]\le\frac12.
\)
Since \(m=2\) and \(r=1\), \(\Delta_1=1\) is a valid and tight choice. The key point is that \(\tau_j\) records only the minimum processing time of job \(j\), not the identity of a machine attaining that minimum. For example, \(\tau_j=1\) means only that the minimum available processing time for job \(j\) is \(1\); by itself, it does not specify the job's destination.}

Under the independence and machine-symmetry assumptions of Giannakopoulos and Kyropoulou~\cite{GK}, applying the profile-resampling construction to their setting yields conditionally independent uniform destinations over the \(mr\) slots.
Hence \(\Delta_r=1\) is a valid and tight choice.

By contrast, in the two-machine hard instance  of
Section~\ref{sec:four-thirds-randomized-bic-lb}, both jobs minimize only
among the \(r\) replicas of machine~1. For some realizations of
\(\tau^{(r)}\), they tie over the same \(r\) slots and therefore collide
with probability \(1/r\). Thus
\(
\frac{1}{r}=\frac{2}{2r},
\)
and \(\Delta_r=2=m\).
More generally, every profile-resampling environment admits a collision
parameter \(\Delta_r\), and
Theorem~\ref{thm:competition-complexity} applies without any further
independence assumption.
%%%%%%%%%%%%%%%%%%%%%%%%%%%%%%%%%%%%%%%%%%%

\begin{theorem}
\label{thm:competition-complexity}
For every \(r\ge1\) and every \(\Delta_r\) satisfying
Definition~\ref{def:delta-r},
\[
\E[\Cmax(\MW^{(r)})]
\le
\left(
1+\frac{\Delta_r(m-1)}{2r}
\right)
\E_{T\sim F}[\OPT_m(T)].
\]
\end{theorem}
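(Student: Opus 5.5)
The plan is to condition on the selected minima \(\tau^{(r)}=u\), bound the conditional expected makespan of \(\MW^{(r)}\) by a multiple of \(A_m(u)\), and then compare \(A_m(\tau^{(r)})\) with the original first best. The last comparison is straightforward. For every job \(j\), \(\tau^{(r)}_j\le \tau_j(T^{(1)})\), since the first layer is one of the \(r\) layers. The function \(A_m\) is coordinatewise nondecreasing, and \(T^{(1)}\sim F\). Hence \eqref{eq:Am-lower-bound} gives
\[
\E[A_m(\tau^{(r)})]\le \E[A_m(\tau(T^{(1)}))]\le \E_{T\sim F}[\OPT_m(T)].
\]
It therefore suffices to prove the pointwise conditional bound
\[
\E\bigl[\Cmax(\MW^{(r)})\,\big|\,\tau^{(r)}=u\bigr]\le\Bigl(1+\tfrac{\Delta_r(m-1)}{2r}\Bigr)A_m(u).
\]
Once \(u\) is fixed, only the destinations \(R^{(r)}\) remain random, and the makespan is given by \eqref{eq:resampled-makespan}.

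For the conditional bound I will write \(A=A_m(u)\), \(M=\max_j u_j\le A\) and \(W=\sum_j u_j\le mA\). Let \(L_s=\sum_j u_j\one\{R_j=s\}\) be the load of slot \(s\), and let
\[
Q=\sum_{j<j'}u_ju_{j'}\one\{R_j=R_{j'}\}
\]
be the total collision mass over unordered pairs. The key pointwise inequality relates \(L_{\max}=\max_s L_s\) to \(Q\). Let \(s^*\) be a maximizing slot and let \(Q_{s^*}\) be the collision mass of the pairs sent to \(s^*\). Then
\[
L_{\max}^2=\sum_{j:R_j=s^*}u_j^2+2Q_{s^*}\le M L_{\max}+2Q.
\]
If \(L_{\max}\ge A\), this gives \(L_{\max}(L_{\max}-A)\le 2Q\), and hence \(L_{\max}\le A+2Q/A\). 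If \(L_{\max}<A\), the same bound holds trivially. Taking the conditional expectation and applying Definition~\ref{def:delta-r} term by term yields
\[
\E[Q\mid u]\le \frac{\Delta_r}{mr}\sum_{j<j'}u_ju_{j'}=\frac{\Delta_r}{2mr}\Bigl(W^2-\sum_j u_j^2\Bigr).
\]

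It then remains to show that \(W^2-\sum_j u_j^2\) is at most of order \(m(m-1)A^2\) under the constraints \(u_j\le A\) and \(W\le mA\). The extremal case is \(m\) jobs of size \(A\), which gives exactly \(m(m-1)A^2\). When \(W=mA\) is spread over many small jobs, \(\sum_j u_j^2\) is small, so I would instead use the sharper charge \(L_{\max}-A\le \frac{1}{A}\bigl(L_{\max}^2-\sum_{j\in s^*}u_j^2\bigr)/\bigl(1+L_{\max}/A\bigr)\). That is, I would exploit that \(L_{\max}\ge A\) implies \(L_{\max}+A\ge 2A\), which recovers the factor \(\tfrac12\).

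I expect the main obstacle to be getting exactly the constant \(\tfrac{m-1}{2}\) rather than a cruder \(m\) or \(2m\). The crude chain above yields \(1+O(\Delta_r m/r)\). To sharpen it, I would first try the identity
\[
L_{\max}^2-A^2=(L_{\max}-A)(L_{\max}+A),
\]
together with \(\sum_{j\in s^*}u_j^2\le A\cdot\min(L_{\max},A)\), aiming to show that \(2A(L_{\max}-A)_+\le 2Q\). If I prove that, then \(\E[L_{\max}\mid u]\le A+\E[Q\mid u]/A\). Combining this with the collision bound on \(\E[Q\mid u]\) would give the stated coefficient \(\tfrac{\Delta_r(m-1)}{2r}\), provided \(W^2-\sum_j u_j^2\le m(m-1)A^2\). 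If that extremal bound fails for spread-out \(u\), I would fall back on a convexity argument. Conditional on the constraints, the ratio of \(\sum_{j<j'}u_ju_{j'}\) to the achievable excess of \(L_{\max}\) over \(A\) is maximized at the configuration of \(m\) equal jobs.
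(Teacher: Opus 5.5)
\paragraph{Verdict.} The proposal has a genuine gap: the concrete steps yield a constant strictly weaker than $\frac{\Delta_r(m-1)}{2r}$ whenever $n>m$, and the fallback does not repair this.

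\paragraph{What works and what fails.} Your outer reduction is fine. Conditioning on $\tau^{(r)}=u$, coupling $T^{(1)}=T$, and using monotonicity of $A_m$ together with $A_m(\tau(t))\le\OPT_m(t)$ is how the conditional bound becomes the theorem. The pointwise target $A(L_{\max}-A)_+\le Q$ is also true. To see this, maximize $\sum_{j:R_j=s^*}u_j^2$ subject to $0\le u_j\le A$ and $\sum_{j:R_j=s^*}u_j=L_{\max}$. Your own route to it does not work, though. Both $\sum_{j:R_j=s^*}u_j^2\le A\min(L_{\max},A)$ and your ``sharper charge'' fail when two jobs of size $A$ share a slot.

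The real problem is the final step. The extremal bound $W^2-\sum_ju_j^2\le m(m-1)A^2$ is false. For $m=2$ and $u=(\tfrac12,\tfrac12,\tfrac12,\tfrac12)$ we have $A=1$ but $W^2-\sum_ju_j^2=3>2$. More generally, for $n\ge m$ the maximum of $\sum_{j<j'}y_jy_{j'}$ over $\{y\in[0,1]^n:\sum_jy_j\le m\}$ is $\frac{m^2(n-1)}{2n}$, attained at the spread-out point $y_j=m/n$. So any argument that bounds $\E[L_{\max}\mid u]-A$ by $\E[Q\mid u]/A$ and then applies Definition~\ref{def:delta-r} term by term gives at best $1+\frac{\Delta_r m(n-1)}{2nr}$. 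This is strictly weaker than the claimed $1+\frac{\Delta_r(m-1)}{2r}$ whenever $n>m$.

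Concretely, let all four jobs above tie on both machines with $r=1$, so $\Delta_1=1$. Your chain gives $7/4$, the theorem asserts $3/2$, and the true conditional expected makespan is $11/8$. Your fallback cannot fix this, because it does not name a convex function or a concrete inequality. Moreover, after this relaxation the quantity you must maximize is $\sum_{j<j'}z_jz_{j'}\Prb[R_j^{(r)}=R_{j'}^{(r)}\mid\tau^{(r)}=u]$. That is a zero-trace quadratic form, hence not convex in $z$, and it is maximized at spread-out points rather than at $m$ equal jobs.

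\paragraph{The missing idea.} Apply convexity before relaxing to pairwise collisions, to the expected makespan itself, with the routing law held fixed.

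\begin{enumerate}
\item Given $\tau^{(r)}=u$, the conditional law of $R^{(r)}$ is fixed. Decouple the sizes from it and define $G(z)=\E[\max_{s}\sum_jz_j\one\{R^{(r)}_j=s\}\mid\tau^{(r)}=u]$.
\item $G$ is convex in $z$, being an average of maxima of linear functions.
\item The conditional makespan equals $A\,G(u/A)$, and $u/A$ lies in $\mathcal P_{n,m}=\{z\in[0,1]^n:\sum_jz_j\le m\}$.
\item The vertices of $\mathcal P_{n,m}$ are the indicators $\mathbf 1_S$ with $|S|\le m$, so $G(u/A)\le\max_{|S|\le m}G(\mathbf 1_S)$.
\item At such a vertex your own inequality specializes to $\max_sN_s\le 1+\sum_{j<j',\,j,j'\in S}\one\{R^{(r)}_j=R^{(r)}_{j'}\}$, where $N_s$ counts the jobs of $S$ sent to $s$.
\item This gives $G(\mathbf 1_S)\le 1+\binom{m}{2}\frac{\Delta_r}{mr}=1+\frac{\Delta_r(m-1)}{2r}$.
\end{enumerate}

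This vertex reduction is exactly what produces $\binom m2$ in place of $\frac{m^2(n-1)}{2n}$.
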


\begin{proof}
Fix a realization \(u\) of \(\tau^{(r)}\) for which the collision bounds
in Definition~\ref{def:delta-r} hold for every pair of distinct jobs.
Let
\(
A=A_m(u).
\)
If \(A=0\), then \(u_j=0\) for every job \(j\), and the conditional
makespan is zero. Assume \(A>0\), and define
\(
y_j=\frac{u_j}{A}.
\)
By the definition of \(A_m\),
\(
0\le y_j\le1
 \ \text{and} \ 
\sum_{j=1}^n y_j\le m.
\)
Hence
\[
y\in\mathcal P_{n,m}
:=
\left\{
z\in[0,1]^n:
\sum_{j=1}^n z_j\le m
\right\}.
\]

For \(z\in\mathcal P_{n,m}\), define
\[
G_u^{(r)}(z)
=
\E\!\left[
\max_{s\in\mathcal S_r}
\sum_{j=1}^n
z_j\one\{R_j^{(r)}=s\}
\,\middle|\,
\tau^{(r)}=u
\right].
\]
Thus \(G_u^{(r)}(z)\) has the following interpretation. Conditioning on
\(\tau^{(r)}=u\) fixes the conditional distribution of the MinWork
destinations \(R^{(r)}\). Keeping this routing distribution fixed, assign job \(j\) size \(z_j\), draw the destinations from this conditional distribution, and compute the maximum total size assigned to any destination slot.
The expectation of this maximum load is \(G_u^{(r)}(z)\).

For every fixed realization of the destinations \(R^{(r)}\), the function
\(
\max_{s\in\mathcal S_r}
\sum_{j=1}^n
z_j\one\{R_j^{(r)}=s\}
\)
is the maximum of finitely many linear functions of \(z\), one for each
destination slot \(s\in\mathcal S_r\), and is therefore convex in \(z\).
Since \(R^{(r)}\) has only finitely many possible realizations,
\(G_u^{(r)}\) is a convex combination of such convex functions. Hence
\(G_u^{(r)}\) is convex.

Since \(m\) is an integer, the vertices of \(\mathcal P_{n,m}\) are the
indicator vectors \(\mathbf 1_S\) of sets \(S\subseteq[n]\) with
\(|S|\le m\). Every \(z\in\mathcal P_{n,m}\) is a convex combination of
such vertices. Therefore, by convexity,
\(
G_u^{(r)}(z)
\le
\max_{\substack{S\subseteq[n]\\ |S|\le m}}
G_u^{(r)}(\mathbf 1_S).
\)

Fix \(S\subseteq[n]\) with \(|S|\le m\). For each destination slot
\(s\in\mathcal S_r\), let
\(
N_s
=
\sum_{j\in S}\one\{R_j^{(r)}=s\}
\ \text{and} \ 
Q=\max_{s\in\mathcal S_r}N_s.
\)
Thus \(Q\) is the maximum number of jobs in \(S\) assigned to a single
destination slot. For every realized routing,
\(
\sum_{\substack{j<j'\\ j,j'\in S}}
\one\{R_j^{(r)}=R_{j'}^{(r)}\}
=
\sum_{s\in\mathcal S_r}\binom{N_s}{2}
\ge
\binom Q2
\ge Q-1.
\)
Hence
\(
Q
\le
1+
\sum_{\substack{j<j'\\ j,j'\in S}}
\one\{R_j^{(r)}=R_{j'}^{(r)}\}.
\)
Since \(z=\mathbf 1_S\) gives unit size to the jobs in \(S\) and zero
size to all other jobs,
\(
G_u^{(r)}(\mathbf 1_S)
=
\E\!\left[
Q
\,\middle|\,
\tau^{(r)}=u
\right]
\le
1+
\E\!\left[
\sum_{\substack{j<j'\\ j,j'\in S}}
\one\{R_j^{(r)}=R_{j'}^{(r)}\}
\,\middle|\,
\tau^{(r)}=u
\right]
\\
=
1+
\sum_{\substack{j<j'\\ j,j'\in S}}
\Prb\!\left[
R_j^{(r)}=R_{j'}^{(r)}
\,\middle|\,
\tau^{(r)}=u
\right].
\)

By Definition~\ref{def:delta-r},
\(
G_u^{(r)}(\mathbf 1_S)
\le
1+
\binom{|S|}{2}\frac{\Delta_r}{mr}
\le
1+
\binom m2\frac{\Delta_r}{mr}
=
1+\frac{\Delta_r(m-1)}{2r}.
\)
Thus for  every \(z\in\mathcal P_{n,m}\) we have,  
\(
G_u^{(r)}(z)
\le
1+\frac{\Delta_r(m-1)}{2r}.
\)
In particular,
\(
G_u^{(r)}(y)
\le
1+\frac{\Delta_r(m-1)}{2r}.
\)
Finally, conditional on \(\tau^{(r)}=u\), the actual processing time of
job \(j\) under MinWork is
\(
u_j=A\cdot y_j.
\)
Therefore
\begin{align*}
\E\!\left[
\Cmax(\MW^{(r)})
\,\middle|\,
\tau^{(r)}=u
\right]
\; = \;
A \cdot \,G_u^{(r)}(y)
\; \le \; 
A_m(u) \cdot 
\left(
1+\frac{\Delta_r(m-1)}{2r}
\right).
\end{align*}
Taking expectations, 
\[
\E[\Cmax(\MW^{(r)})]
\le
\left(
1+\frac{\Delta_r(m-1)}{2r}
\right)
\E\!\left[A_m(\tau^{(r)})\right].
\]
Now, Couple the first layer with the original profile, \(T^{(1)}=T\).
Since
\(
\tau_j^{(r)}
=
\min_{i,a} T_{ij}^{(a)}
\le
\min_i T_{ij}^{(1)}
=
\tau_j(T),
\)
coordinatewise monotonicity of \(A_m\) gives
\(
A_m(\tau^{(r)})
\le
A_m(\tau(T)).
\) Finally, by~\eqref{eq:Am-lower-bound}, 

\[
\E[\Cmax(\MW^{(r)})]
\le
\left(
1+\frac{\Delta_r(m-1)}{2r}
\right)
\E_{T\sim F}[\OPT_m(T)].
\]
\end{proof}  % Theorem 3 

\subsection{Consequences and Examples}
\label{sec:uniform-collision-bounds}

Theorem~\ref{thm:competition-complexity} becomes especially simple when \(\Delta_r\) is  bounded. In this subsection, we derive the corresponding number of layers needed for a \((1+\varepsilon)\)-approximation.

\begin{proposition}
\label{prop:uniform-collision-upper}
Suppose
\(
\Delta_r \le C_{\mathrm{ac}}
\ \text{for all } r\ge1.
\)
Then, for every \(\varepsilon>0\),
\(
r\ge
\left\lceil
\frac{C_{\mathrm{ac}}(m-1)}{2\varepsilon}
\right\rceil
\)
total layers suffice to guarantee
\(
\E[\Cmax(\MW^{(r)})]
\le
(1+\varepsilon)\E_{T\sim F}[\OPT_m(T)].
\)
In particular, if \(C_{\mathrm{ac}}\) 
is  independent of $m$ across a class  of priors, then
\(
O\!\left(\frac{m}{\varepsilon}\right)
\)
total layers suffice for a \((1+\varepsilon)\)-approximation throughout the family.
\end{proposition}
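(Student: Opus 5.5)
The proposition is a direct corollary of Theorem~\ref{thm:competition-complexity}; the plan is to plug the uniform bound on $\Delta_r$ into that theorem and solve for $r$. Fix $\varepsilon>0$ and any $r\ge \lceil C_{\mathrm{ac}}(m-1)/(2\varepsilon)\rceil$.

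First, since the hypothesis gives $\Delta_r\le C_{\mathrm{ac}}$ and $C_{\mathrm{ac}}\ge 0$, the constant $C_{\mathrm{ac}}$ itself satisfies Definition~\ref{def:delta-r} whenever $\Delta_r$ does. Indeed, the collision bound $\Delta_r/(mr)$ only weakens as the parameter grows, and $C_{\mathrm{ac}}\le mr$ can be assumed without loss of generality. Alternatively, I can keep $\Delta_r$ itself and use monotonicity of the factor $1+\Delta_r(m-1)/(2r)$ in $\Delta_r$. Either way, Theorem~\ref{thm:competition-complexity} yields
\[
\E[\Cmax(\MW^{(r)})]\le\left(1+\frac{C_{\mathrm{ac}}(m-1)}{2r}\right)\E_{T\sim F}[\OPT_m(T)].
\]

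Second, the choice of $r$ gives $r\ge C_{\mathrm{ac}}(m-1)/(2\varepsilon)$, which is equivalent to $C_{\mathrm{ac}}(m-1)/(2r)\le\varepsilon$ because $r\ge1>0$. If $C_{\mathrm{ac}}=0$, the ceiling is $0$, and any $r\ge1$ trivially works. Since $\E[\OPT_m(T)]\ge0$, multiplying the inequality $1+C_{\mathrm{ac}}(m-1)/(2r)\le 1+\varepsilon$ by it gives the claimed $(1+\varepsilon)$-approximation.

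Finally, for the asymptotic statement, if $C_{\mathrm{ac}}$ is a constant independent of $m$ across the family, then $\lceil C_{\mathrm{ac}}(m-1)/(2\varepsilon)\rceil\le C_{\mathrm{ac}}m/(2\varepsilon)+1=O(m/\varepsilon)$, uniformly over the family. There is no real obstacle here. The only point needing care is the first step, namely justifying that the uniform bound may replace $\Delta_r$ in Theorem~\ref{thm:competition-complexity}, and monotonicity of the bound in $\Delta_r$ settles it.
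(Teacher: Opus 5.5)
Your proposal is correct and follows the paper's proof essentially verbatim. You apply Theorem~\ref{thm:competition-complexity}, use $\Delta_r\le C_{\mathrm{ac}}$ to bound $\Delta_r(m-1)/(2r)$ by $C_{\mathrm{ac}}(m-1)/(2r)\le\varepsilon$ for the stated $r$, and read off the $O(m/\varepsilon)$ count when $C_{\mathrm{ac}}$ does not depend on $m$; your remarks on $C_{\mathrm{ac}}=0$ and on absorbing the ceiling are harmless details the paper leaves implicit.
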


\begin{proof}
By Theorem~\ref{thm:competition-complexity},
\(
\E[\Cmax(\MW^{(r)})]
\le
\left(
1+\frac{\Delta_r(m-1)}{2r}
\right)
\E_{T\sim F}[\OPT_m(T)].
\)
Since \(\Delta_r\le C_{\mathrm{ac}}\), we have
\(
\frac{\Delta_r(m-1)}{2r}
\le
\frac{C_{\mathrm{ac}}(m-1)}{2r}.
\)
Therefore, if
\(
r\ge
\left\lceil
\frac{C_{\mathrm{ac}}(m-1)}{2\varepsilon}
\right\rceil,
\)
then
\(
\frac{\Delta_r(m-1)}{2r}\le\varepsilon,
\)
and hence
\(
\E[\Cmax(\MW^{(r)})]
\le
(1+\varepsilon)\E_{T\sim F}[\OPT_m(T)].
\)

If $C_{\mathrm{ac}}$ is independent of $m$, 
then choosing
$ \left\lceil \frac{C_{\mathrm{ac}}(m-1)}{2\varepsilon} \right\rceil =O\left(\frac{m}{\varepsilon}\right) $
total layers suffices for a \((1+\varepsilon)\)-approximation throughout the class of priors.
\end{proof}

%%%%%%%%%%%%%%%%%%%%%%%%%%%%%%%%%%
%%%%%%%%%%%%%%%%%%%%%%%%%%%%%%%%%%%

%%%%%%%%%%%%%%%%%%%%%%%%%%%%%%%%%%%%%%%%%%%%%%%%%%%%%%%%%%%%%%%%%%

\paragraph{Examples.}
We next give three examples illustrating
Proposition~\ref{prop:uniform-collision-upper}.

\begin{itemize}[leftmargin=*]

\item \textbf{No collisions.}
Suppose that, in every realized layer \(a\), distinct jobs have disjoint
sets of minimizing machine coordinates:
\[
\operatorname*{arg\,min}_{i} T_{ij}^{(a)}
\cap
\operatorname*{arg\,min}_{i} T_{ij'}^{(a)}
=
\varnothing
\qquad
\text{for all } j\ne j'.
\]
Recall that a collision can occur only if two jobs choose the same machine coordinate in the same layer, which is ruled out by the disjoint-minimizer condition above. Hence, for every realization \(u\) of \(\tau^{(r)}\) and for any distinct jobs \(j\ne j'\),
\(
\Prb\!\left[
R_j^{(r)}=R_{j'}^{(r)}
\,\middle|\,
\tau^{(r) } =u
\right]
=0,
\) so \(\Delta_r=0\) for every \(r\).

In fact, no augmentation is needed. When \(r=1\), MinWork assigns
different jobs to different machines and therefore has makespan
\(
\max_j \tau_j(T).
\)
This is a lower bound on \(\OPT_m(T)\), and MinWork attains it. Hence
MinWork is first-best for every realization.

\item \textbf{Independent and machine-identical processing times.}
Under the assumptions of Giannakopoulos and Kyropoulou~\cite{GK}, the processing times
$
\{T_{ij}\}_{i\in[m],\,j\in[n]}
$
are mutually independent, and for each job \(j\) there is a distribution
\(F_j\) such that
$
T_{ij}\sim F_j
\ 
\text{for every machine } i.
$
Thus, for each fixed job, the processing time is identically distributed
across machines, while the distribution may differ across jobs.
Under our profile-resampling construction, the \(mr\) destination-slot
processing times of each job are therefore identically distributed and
independent, and the corresponding vectors are independent across jobs.
Together with independent uniform tie-breaking, this implies that,
conditional on \(\tau^{(r)}\), the destinations selected by distinct jobs
are independent and uniform over the \(mr\) destination slots.  Therefore, for every realization \(u\) of \(\tau^{(r)}\) and for any  distinct jobs \(j\ne j'\),
\[
\Prb\!\left[
R_j^{(r)}=R_{j'}^{(r)}
\,\middle|\,
\tau^{(r) }= u
\right]
=
\frac{1}{mr}.
\]
Thus for \(n\ge2\),  \(\Delta_r=1\) is a valid and tight choice for every \(r\).
The same constant works for all \(m\), so
Proposition~\ref{prop:uniform-collision-upper} gives
\(O(m/\varepsilon)\) total layers for a
\((1+\varepsilon)\)-approximation.

\item \textbf{A common fast-machine pool.}
Suppose \(m\) is even and the machines are partitioned in advance into two fixed pools of size \(m/2\). In each layer, one of the two pools is chosen uniformly at random, independently across layers, to be fast for all jobs. Machines in the fast pool have processing time 1, while machines in the other pool have processing time 2.
This models, for example, a shared caching or data-locality effect:
in each layer, one pool has the relevant data already cached and can
therefore process all jobs faster.

Each layer contributes \(m/2\) minimizing destination slots, so across
\(r\) layers every job has exactly \(mr/2\) minimizing slots. Conditional
on any realization of the fast pools, independent uniform tie-breaking
therefore gives
\[
\Prb\!\left[
R_j^{(r)}=R_{j'}^{(r)}
\,\middle|\,
\text{fast-pool realizations}
\right]
=
\frac{1}{mr/2}
=
\frac{2}{mr}.
\]
This value is the same for every realization of the fast pools.
Thus for \(n\ge2\), \(\Delta_r=2\) is a valid and tight choice for every \(r\).
Since the same constant works for every even \(m\),
Proposition~\ref{prop:uniform-collision-upper} gives
\(O(m/\varepsilon)\) total layers for a
\((1+\varepsilon)\)-approximation.

\end{itemize}

The next result shows that this \(m/\varepsilon\) dependence is tight. In particular, \(\Omega(m/\varepsilon)\) layers may be necessary
even in the independent-uniform case above, where \(\Delta_r=1\).
%%%%%%%%%%%%%%%%%%############################################
\begin{proposition}
\label{prop:all-unit-lower}
Let \(n=m\ge 2\), and \(T_{ij}\equiv1\) for every \(i,j\).
Then \(\OPT_m=1\) and, under the independent uniform tie-breaking
convention, \(\Delta_r=1\) for every \(r\). If
$
\E[\Cmax(\MW^{(r)})]\le1+\varepsilon,$ and 
$0<\varepsilon<\frac12,
$
then

$$
r
\ge
\frac{m-1}{2[-\log(1-\varepsilon)]}
>
\frac{m-1}{4\varepsilon}.
$$

In particular, \(r=\Omega(m/\varepsilon)\).

\end{proposition}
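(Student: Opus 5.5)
Since every processing time equals $1$, every allocation has makespan at least $\max_j T_{ij}=1$, and assigning job $j$ to machine $j$ (possible because $n=m$) gives makespan exactly $1$, so $\OPT_m=1$. In the $r$-layer environment every slot has processing time $1$ for every job. Hence $\tau^{(r)}\equiv(1,\ldots,1)$ is deterministic, and under the independent uniform tie-breaking convention the destinations $R_1^{(r)},\ldots,R_m^{(r)}$ are i.i.d.\ uniform over the $mr$ slots. Thus $\Prb[R_j^{(r)}=R_{j'}^{(r)}\mid\tau^{(r)}]=1/(mr)$ exactly, and $\Delta_r=1$ is valid and tight. This disposes of the first two assertions.

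For the lower bound, the makespan is the maximum bin load when $m$ balls are thrown independently and uniformly into $N=mr$ bins. It is at least $1$ always and at least $2$ whenever some collision occurs. Therefore
\[
\E[\Cmax(\MW^{(r)})]\ \ge\ 1+\Prb[\text{some collision}]=2-\Prb[\text{all distinct}].
\]
The hypothesis $\E[\Cmax]\le 1+\varepsilon$ then gives $\Prb[\text{all distinct}]\ge 1-\varepsilon$. Next I will bound the no-collision probability from above:
\[
\Prb[\text{all distinct}]=\prod_{k=0}^{m-1}\Bigl(1-\frac{k}{mr}\Bigr)\le \exp\Bigl(-\sum_{k=0}^{m-1}\frac{k}{mr}\Bigr)=\exp\Bigl(-\frac{m-1}{2r}\Bigr),
\]
using $1-x\le e^{-x}$. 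Combining the two bounds gives $\exp(-(m-1)/(2r))\ge 1-\varepsilon$. Taking logarithms yields $(m-1)/(2r)\le -\log(1-\varepsilon)$, that is, $r\ge (m-1)/\bigl(2[-\log(1-\varepsilon)]\bigr)$.

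It remains to show $-\log(1-\varepsilon)<2\varepsilon$ for $0<\varepsilon<1/2$. Let $g(\varepsilon)=2\varepsilon+\log(1-\varepsilon)$. Then $g(0)=0$ and $g'(\varepsilon)=2-1/(1-\varepsilon)>0$ on $(0,1/2)$, so $g>0$ there. This gives the strict inequality $\frac{m-1}{2[-\log(1-\varepsilon)]}>\frac{m-1}{4\varepsilon}$, and hence $r=\Omega(m/\varepsilon)$.

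The only step needing care is the reduction from expected makespan to collision probability. I use only the crude bound "load $\ge 2$ on a collision", which suffices, together with the standard birthday-type product bound. I do not expect a genuine obstacle.
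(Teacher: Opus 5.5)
Your proposal is correct and follows the paper's proof essentially step for step: you compute $\OPT_m=1$ and $\Delta_r=1$ the same way, use the same reduction $\E[\Cmax(\MW^{(r)})]\ge 2-\Pr[\text{no collision}]$, and apply the same birthday-product bound via $1-x\le e^{-x}$. The only cosmetic difference is that you prove $-\log(1-\varepsilon)<2\varepsilon$ with a derivative argument, whereas the paper uses the chain $-\log(1-\varepsilon)\le \varepsilon/(1-\varepsilon)<2\varepsilon$.
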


Full proof in Appendix~\ref{sec:appendix-BK-LB}.

%%%%%%%%%%%%%%%%%%%%%%%%%%%%%%%%%%
%%%%%%%%%%%%%%%%%%%%%%%%%%%%%%%%%%%

\subsection{The Unaugmented Case: A General Upper Bound}
\label{sec:r-one-gk}
Observe that \(\Delta_1\) is defined for every prior, without any symmetry or independence assumption. Thus, setting \(r=1\),  Theorem~\ref{thm:competition-complexity}   gives the following prior-dependent performance bound for ordinary MinWork:
$ \E[\Cmax(\MW)] \le \left(1+\frac{\Delta_1(m-1)}{2}\right)\E[\OPT_m]. $
MinWork does not need to know the prior or \(\Delta_1\); whenever \(\Delta_1\) can be upper bounded, the theorem gives an explicit guarantee. Because it depends only on pairwise destination collisions, the bound applies more broadly than the Giannakopoulos--Kyropoulou analysis~\cite{GK}, which assumes stronger symmetry and independence. For \(m=2\) and \(\Delta_1=1\), Theorem~\ref{thm:competition-complexity} gives a \(3/2\)-approximation. This is tight: on the all-unit instance with two jobs and two machines, which is the canonical balls-into-bins instance underlying the Giannakopoulos--Kyropoulou analysis~\cite{GK}, MinWork has expected makespan \(3/2\), while the first-best makespan is \(1\).  For larger \(m\), however, the Giannakopoulos--Kyropoulou bound can be substantially sharper under their symmetric independent setting.

%%%%%%%%%%%%%%%%%%%%%%%%%%%%%%%%%%%%%%%%%%%%%%%%%%%%%

\section{Conclusion}
This paper separates three questions in Bayesian scheduling. First, randomized BIC mechanisms can remain bounded away from the first best: we prove an asymptotic \(4/3\) lower bound. Second, randomization can strictly improve BIC performance, yielding an asymptotic lower bound of \(8/7\) on the deterministic-to-randomized BIC gap. Thus, optimal prior-dependent BIC scheduling can require randomization. Third, we show that resource augmentation can make MinWork nearly first best. In the profile-resampling model,
\(
\E[\Cmax(\MW^{(r)})]
\le
\left(1+\frac{\Delta_r(m-1)}{2r}\right)\E[\OPT_m].
\)
Hence, when \(\Delta_r\) is bounded by a constant independent of \(m\) and \(r\), \(O(m/\varepsilon)\) sampled layers suffice for a \((1+\varepsilon)\)-approximation (equivalently, \(O(m/\varepsilon)\) sampled replicas per machine suffice), and this dependence is asymptotically tight in the worst case. As a by-product, setting \(r=1\) yields a general upper bound for ordinary MinWork on every prior.

\sloppy

\appendix
\section{Deferred Proofs}
\label{sec:appendix}

\subsection{Proof of Claim~\ref{clm:deterministic-four-thirds}}
\label{sec:appendix-deterministic-four-thirds}
\begin{proof}
Let \(M\) be a deterministic BIC mechanism. 
First suppose that \(D>0\) or \(D'>0\). Since \(M\) is deterministic and
machine \(2\)'s two types are equally likely, this means that on at least one
of the four type profiles, job \(2\) is assigned to machine \(2\). On that
profile, the makespan is at least \(H/2\). Since each profile has probability
\(1/4\),
\(
        \E[\Cmax(M)]
        \ge
        \frac14\cdot\frac{H}{2}
        =
        \frac H8.
\)
Because \(H=4/\delta\) and \(\delta < 1/4\), we have \(H\ge 16\). Therefore
\(
        \E[\Cmax(M)]\ge2.
\)
It remains to consider the case \(D=D'=0\). Then job \(2\) is assigned on every type profile
to machine \(1\), and therefore \(S=S'=0\). The switch inequality becomes
\(
        \delta(B-B')\le0,
\)
so
\begin{equation}
B\le B'.
\label{eq:det-monotonicity}
\end{equation}

Conditional on machine \(1\) having type \(a=(1,\delta)\), the mechanism either
assigns both jobs to machine \(1\), which has makespan \(1+\delta\), or assigns
job \(1\) to machine \(2\) and job \(2\) to machine \(1\), which has makespan
\(2\). Hence
\begin{equation}
\E[\Cmax(M) \mid a]
        =
        B(1+\delta)+(1-B)2
        =
        2-(1-\delta)B.
\label{eq:det-a-cost}
\end{equation}
Conditional on machine \(1\) having type \(b=(1-\delta,2)\), assigning both
jobs to machine \(1\) has makespan \(3-\delta\), while assigning job \(1\) to
machine \(2\) and job \(2\) to machine \(1\) has makespan \(2\). Hence
\begin{equation}
\E[\Cmax(M)\mid b\;]
        =
        B'(3-\delta)+(1-B')2
        =
        2+(1-\delta)B'.
\label{eq:det-b-cost}
\end{equation}
Averaging \eqref{eq:det-a-cost} and \eqref{eq:det-b-cost}, and using
\eqref{eq:det-monotonicity}, gives
\(
        \E[\Cmax(M)]
        =
        2+\frac{1-\delta}{2}(B'-B)  \; \ge  \; 2. 
\)
Thus in all cases \(\E[\Cmax(M)]\ge2\). By \eqref{eq:four-thirds-opt},
\[
        \frac{\E[\Cmax(M)]}{\E[\OPT_2]}
        \ge
        \frac{2}{(3+\delta)/2}
        =
        \frac4{3+\delta}.
\]
\end{proof}

\subsection{Proof of Theorem~\ref{thm:deterministic-vs-randomized}}
\label{sec:appendix-deterministic-vs-randomized}
\begin{proof}
For a type \(r\) of a given machine, let \(q_r\) denote its interim allocation vector, whose \(j\)-th coordinate is the probability that job \(j\) is assigned to that machine. Let \(P_r\) denote its interim expected payment. We consider unrelated-machines makespan minimization with two machines and
two jobs.  Each machine has two possible types, each occurring with
probability \(1/2\), independently across machines.
We denote allocations by \(11,12,21,22\), where the first digit is the machine receiving job 1 and the second the machine receiving job 2.
Fix an integer \(H\ge 2\).  Machine 1 has two possible types, each with probability 1/2:
\[
    A=(H,\; H+1),
    \qquad
    B=(3H+1,\;H).
\]
Machine 2 has two possible types, each with probability 1/2:
\[
    C=(4H+1,\;3H+1),
    \qquad
    D=(2,\;2H+1).
\]
The prior is the product of these two marginals.
The makespan table is:
\[
\begin{array}{c|rrrr}
\text{Profile} & 11\;  & 12\;  & 21\;  & 22\; \\
\hline
(A,C) & 2H+1 & 3H+1 & 4H+1 & 7H+2\\
(A,D) & 2H+1 & 2H+1 & H+1   & 2H+3\\
(B,C) & 4H+1 & 3H+1 & 4H+1 & 7H+2\\
(B,D) & 4H+1 & 3H+1 & H     & 2H+3
\end{array}
\]

Thus the unique profile-wise  first-best optimum  is
\(
(A,C)\mapsto 11,\;
(A,D)\mapsto 21,\;
(B,C)\mapsto 12,\;
(B,D)\mapsto 21.
\)
Therefore by Lemma~\ref{lem:randomized-opt}
\[
    \mathbb E[\OPT_2]
    = \frac{(2H+1)+(H+1)+(3H+1)+H}{4} = 
    \frac{7H+3}{4}.
\]
\paragraph{The deterministic BIC optimum.}
We  show that the first-best optimum is not BIC.
Under the first-best, machine 1's interim allocations are
\(
    q_A=\left(\frac12,\; 1\right),
    \;
    q_B=\left(\frac12,\; \frac12\right).
\)

The BIC constraints
for machine 1 require
\[
    P_A-P_B
    \ge
    A\cdot(q_A-q_B)
    =
    (H,H+1)\cdot\left(0,\; \frac12\right)
    =
    \frac{H+1}{2},
\]
and also
\[
    P_A-P_B
    \le
    B\cdot(q_A-q_B)
    =
    (3H+1,H)\cdot\left(0,\; \frac12\right)
    =
    \frac H2.
\]
This is impossible.  Therefore the first-best is not BIC.
Moreover, as seen from the table above, on each profile, every deterministic allocation other than the first-best increases the makespan by at least \(H\). Hence any deterministic rule with total cost strictly  below
$ (7H+3)+H
    = 8H+3 $
is non-BIC.
 Therefore every deterministic
BIC mechanism has total cost at least \(8H+3\), and
\(
    \OPT_{\DetBIC}
    \ge
    \frac{8H+3}{4}.
\)
This lower bound is tight.  

Consider the deterministic rule (which changes only the allocation for $(B, C)$)): 
\(
(A,C)\mapsto 11,\;
(A,D)\mapsto 21,\;
(B,C)\mapsto 11,\;
(B,D)\mapsto 21.
\)
Its total cost is
\(
    (2H+1)+(H+1)+(4H+1)+H
    =
    8H+3.
\)
We show that this deterministic allocation rule is BIC.  Clearly, for machine 1, we have 
\(
    q_A=q_B=\left(\frac12,\; 1\right),
\)
hence its BIC constraints are satisfied by choosing \(P_A=P_B\).

For machine 2, we have 
\(
    q_C=(0,\; 0),
    \; 
    q_D=(1,\; 0).
\)
Thus
\(
    q_C-q_D=(-1,\; 0).
\)
The BIC constraints for machine 2 require
\(
    C\cdot(q_C-q_D)
    \le
    P_C-P_D
    \le
    D\cdot(q_C-q_D).
\)
That is,
\(
    -(4H+1)
    \le
    P_C-P_D
    \le
    -2.
\)
This interval is nonempty.  For example, choose
\(
    P_C-P_D=-2.
\)
Thus the rule is deterministic BIC.
Consequently,
\(
    \OPT_{\DetBIC}
    =
    \frac{8H+3}{4}.
\)

\paragraph{A randomized BIC mechanism}

Let 
\(
    \delta=\frac{1}{2H+2}.
\)
Now consider the following randomized mechanism:
\(
(A,C)\mapsto 11, \; (B,C)\mapsto 12,
\;
(B,D)\mapsto 21, 
\) and 
\[
(A,D)\mapsto
\begin{cases}
21 & \text{with probability }1-\delta,\\
12 & \text{with probability }\delta.
\end{cases}
\]

This mechanism differs from the  first-best optimum only at profile
\((A,D)\), where allocation \(12\) has cost \(2H+1\) instead of the optimal
cost \(H+1\).  The cost increase at that profile is \(H\), and it is incurred
with probability \(\delta\).  Therefore the sum of the four conditional expected profile costs is
\(
    7H+3+\delta H
    =
    7H+3+\frac{H}{2H+2}.
\)
Hence this randomized mechanism has expected makespan
\(
    \frac14
    \left(
        7H+3+\frac{H}{2H+2}
    \right).
\)
It remains to verify BIC.
For machine 1, the interim allocations are
\(
    q_A
    =
    \left(\frac{1+\delta}{2}, \; 1-\frac{\delta}{2}\right),
    \; 
    q_B
    =
    \left(\frac12,\frac12\right).
\)
Therefore
\(
    q_A-q_B
    =
    \left(\frac{\delta}{2}, \;\frac{1-\delta}{2}\right).
\)

Machine 1's BIC constraints are feasible if
\(
    A\cdot(q_A-q_B)
    \le P_A - P_B \le 
    B\cdot(q_A-q_B).
\)
We have
\[
    A\cdot(q_A-q_B)
    =
    \frac{H\delta+(H+1)(1-\delta)}{2}
    =
    \frac{H+1-\delta}{2},
\]
and
\[
    B\cdot(q_A-q_B)
    =
    \frac{(3H+1)\delta+H(1-\delta)}{2}
    =
    \frac{H+(2H+1)\delta}{2}.
\]
These two quantities are equal exactly when
\(
    H+1-\delta=H+(2H+1)\delta,
\)
or equivalently
\(
    \delta=\frac{1}{2H+2}.
\)
Thus machine 1's BIC constraints bind exactly and can be satisfied by an
appropriate choice of \(P_A - P_B =  \frac{H+1-\delta}{2} \).

For machine 2, the interim allocations are
\(
    q_C=\left(0, \; \frac12\right),
    \;
    q_D=\left(1-\frac{\delta}{2},\; \frac{\delta}{2}\right).
\)
Hence
\(
    q_C-q_D
    =
    \left(
        -1+\frac{\delta}{2},
        \; \frac12-\frac{\delta}{2}
    \right).
\)
Machine 2's BIC constraints are feasible if
\[
    C\cdot(q_C-q_D)
    \le P_C - P_D\le 
    D\cdot(q_C-q_D).
\]
It is enough to verify that  \(
    C\cdot\left(-2+\delta , 1 -\delta \right)
    <  0 <  
    D\cdot\left(-2+\delta , 1 -\delta \right).
\)
Now
\[
    C\cdot\left(-2+\delta , 1 -\delta \right) 
    =
    (4H+1)(-2)+(4H+1)\delta +(3H+1)-(3H+1)\delta =   -(5H+1)+H \delta 
\]
Substituting \(
    \delta=\frac{1}{2H+2}
\) 
gives   
\(
   -(5H+1)+\frac{H}{2H+2} <  -(5H+1)+1  < 0,  
\) 
 for every $H \ge 2$. 
In addition, 
\[
    D\cdot\left(-2+\delta , 1 -\delta \right) 
    = 2(-2+\delta)+(2H+1)(1-\delta) = 2H-3-\delta(2H-1).
\]
Substituting \(
    \delta=\frac{1}{2H+2}
\) gives   
\(
2H - 3 - \frac{2H-1}{2H+2} = \frac{4H^2 - 4H -5}{2H+2} > 0  
\)
for every $H \ge 2$.
Hence machine 2's two BIC inequalities are simultaneously satisfied by choosing
\(
    P_C-P_D=0.
\)
Thus the randomized mechanism is BIC.  Consequently,
\(
    \OPT_{\RandBIC}
    \le
    \frac14
    \left(
        7H+3+\frac{H}{2H+2}
    \right).
\)

\paragraph{The integrality gap}

Combining the deterministic lower bound and the randomized upper bound gives
\[
    \frac{\OPT_{\DetBIC}}{\OPT_{\RandBIC}}
    \ge
    \frac{
        \frac{8H+3}{4}
    }{
        \frac14\left(7H+3+\frac{H}{2H+2}\right)
    }
    =
    \frac{8H+3}{7H+3+\frac{H}{2H+2}}.
\]
As \(H\to\infty\),
\(
    \; \frac{8H+3}{7H+3+\frac{H}{2H+2}}
    \longrightarrow
    \frac87.
\)
\end{proof}

\subsection{Proof of Proposition~\ref{prop:all-unit-lower}}
\label{sec:appendix-BK-LB}
\begin{proof}
The first-best optimum is \(1\), since the \(m\) unit jobs can be assigned
one per machine. In the \(r\)-layer environment, every job ties across
all \(mr\) destination slots, so independent uniform tie-breaking makes
the \(m\) destinations independent and uniform. Thus two distinct jobs
collide with probability \(1/(mr)\), and hence \(\Delta_r=1\).
Let \(Y=\Cmax(\MW^{(r)})\), the maximum occupancy of the \(mr\) slots.
The probability of no collision satisfies $
\Prb[\text{no collision}]
=
\prod_{\ell=0}^{m-1}
\left(1-\frac{\ell}{mr}\right)
\le
\exp\!\left(-\frac{m-1}{2r}\right).
$

Since \(Y\ge1\) always and \(Y\ge2\) whenever a collision occurs,
$
\E[Y]
\ge
1+\Prb[\text{collision}]
\ge
2-\exp\!\left(-\frac{m-1}{2r}\right).
$
Therefore \(\E[Y]\le1+\varepsilon\) implies
$
1-\exp\!\left(-\frac{m-1}{2r}\right)
\le\varepsilon,
$
and hence
$
r
\ge
\frac{m-1}{2[-\log(1-\varepsilon)]}.
$
Finally, for \(0<\varepsilon<1/2\), we have 
$
-\log(1-\varepsilon)
\le
\frac{\varepsilon}{1-\varepsilon}
<
2\varepsilon,
$
which yields
$
r>\frac{m-1}{4\varepsilon}.
$
\end{proof}
\end{document}